\newcommand{\flip}{\text{flip}}
\newtheorem{theorem}{Theorem}[section]
\title{Biologically Plausible Sequence Learning with Spiking Neural Networks}
\author{
Zuozhu Liu,\textsuperscript{\rm 1,3}
Thiparat Chotibut,\textsuperscript{\rm 2,3,}\thanks{Correspondence to: thiparatc@gmail.com}
Christopher Hillar,\textsuperscript{\rm 4,5}
Shaowei Lin,\textsuperscript{\rm 3}\\
\textsuperscript{\rm 1}Department of Statistics and Applied Probability, National University of Singapore,\\
\textsuperscript{\rm 2}Department of Physics, Faculty of Science, Chulalongkorn University, Thailand,\\
\textsuperscript{\rm 3}Engineering Systems and Design, Singapore University of Technology and Design,
\textsuperscript{\rm 4}Awecom, Inc,\\
\textsuperscript{\rm 5}Redwood Center for Theoretical Neuroscience, University of California, Berkeley, \\
lcowen.hn@gmail.com, {\rm *}thiparatc@gmail.com, hillarmath@gmail.com, 
shaowei\_lin@sutd.edu.sg
}
\begin{document}
\maketitle
\begin{abstract}
Motivated by the celebrated discrete-time model of nervous activity outlined by McCulloch and Pitts in 1943, we propose a novel continuous-time model, the McCulloch-Pitts network (MPN), for sequence learning in spiking neural networks. Our model has a local learning rule, such that the synaptic weight updates depend only on the information directly accessible by the synapse. By exploiting {\it asymmetry} in the connections between binary neurons, we show that MPN can be trained to robustly memorize multiple {\it spatiotemporal} patterns of binary vectors, generalizing the ability of the symmetric Hopfield network to memorize static spatial patterns. In addition, we demonstrate that the model can efficiently learn sequences of binary pictures as well as generative models for experimental neural spike-train data. Our learning rule is consistent with spike-timing-dependent plasticity (STDP), thus providing a theoretical ground for the systematic design of biologically inspired networks with large and robust long-range sequence storage capacity.	
\end{abstract}

\section{Introduction}
\label{intro}

Experimental evidence from neurobiology reveals that changes in synaptic weights of some neurons depend on the timing difference between presynaptic and postsynaptic spikes  \cite{markram1995action,gerstner1996neuronal,bi2001synaptic}, a concept termed {\it spike-timing-dependent plasticity} (STDP). While deep learning is tremendously successful in numerous machine learning tasks, its underlying backpropagation learning algorithm is biologically implausible \cite{bengio2015towards}. Recent works attempt to bridge this gap by proposing STDP-consistent training rules for deep learning \cite{bengio2015towards,bengio2017stdp,liu2018variational,scellier2017equilibrium}, with the hope of uncovering computationally efficient learning algorithms inspired by biological brains.  Thorough understanding of STDP-consistent learning rules and their computing capabilities can also benefit the design of modern bioinspired computing hardware such as neuromorphic chips. 

On the other hand, machine learning can assist in the quest for understanding how brains perform computation, especially when  viewed through the lens of a spiking neural network (SNN). For example, variational inference and reinforcement learning unveil SNN architectures that perform probabilistic inference consistent with STDP rules \cite{nessler2013bayesian,pecevski2016learning,rezende2011variational}.
Learning-to-learn methods and Long Short-Term Memory help identify a novel SNN architecture with the capability to store long spatiotemporal sequences of spikes during computation \cite{bellec2018long}.
Moreover, building generative models of SNNs from experimental time-series data of action potentials yields insights into the network structure generating such spikes \cite{tyrcha2013effect,nasser2013spatio,zeng2013maximum}.

With the goal of building a biologically plausible generative model of SNNs from spatiotemporal patterns, such as neural spike-train data, we propose a novel continuous-time learning algorithm that is consistent with experimentally observed STDP.
We define simple rules for updating states and learning weights of a continuous-time SNN and demonstrate that the expected synaptic changes given a pair of presynaptic and postsynaptic spikes reproduce the STDP curve. 
As a consequence of our model, we show that the stochasticity of the refractory period is an important ingredient for reproducing STDP. Namely, by assuming that the time between the spiking of a neuron and its recovery is stochastic and follows an exponential distribution, we prove that the expected increase or decrease in synaptic weights follows STDP-like curves.

In addition to its biological plausibility, our SNN model is capable of learning sequences of binary vectors, or spatiotemporal patterns of spikes. In the deterministic limit, this network, a McCulloch-Pitts network \cite{mcculloch1943logical}, is a generalization of the Hopfield network that allows {\it asymmetric} connections between fully-visible binary units. 
As a result, deterministic dynamics in the state space can converge to an attractor of period larger than one (a cycle), as opposed to a fixed-point corresponding to a static spatial memory in the Hopfield network. 
We demonstrate that the network can be trained to memorize repeating sequences of binary vectors and that the spatiotemporal memory is robust to perturbation. In particular, given an arbitrary initial state of the spike pattern, the dynamics generated by the trained network converge to a memorized cycle. 
Thus, the STDP-consistent McCulloch-Pitts network performs a spatiotemporal pattern completion, generalizing the classical Hopfield network that performs spatial pattern completion. 
In the noisy limit, the trained network can be regarded as a generative model for spatiotemporal patterns of spikes that performs sampling of both the time to the next spiking event and the neuron to next spike. 
We illustrate that this network efficiently performs inference on continuous-time data from a neurobiology experiment, and we discuss its applicability to inference problems in neuroscience.

\section{McCulloch-Pitts Networks}
\label{sec:MPNetsModel}

To begin, suppose we are presented with a time series $\mathcal{D}(t)$ of binary vectors with single bit-flip transitions:
\begin{align*}
\begin{array}{rll}
\mathcal{D}(t) &= x^{(0)}\quad &\text{for }0= t_0 \leq t < t_1,\\
&\vdots & \\
\mathcal{D}(t) &= x^{(N-1)}\quad &\text{for } t_{N-1} \leq t < t_{N}, \\
\mathcal{D}(t) &= x^{(N)}\quad &\text{for } t= t_{N}.
\end{array}
\end{align*}
Our goal is to train a continuous-time stochastic process that mimics $\mathcal{D}(t)$.

In this section, we define our model of a neural stochastic process, which we term a McCulloch-Pitts network. Relationships to biological neural networks will be discussed in the Spike-Timing-Dependent Plasticity section. 
 Let $G=(V,E)$ be a weighted directed graph with vertices $V=\{1, \ldots, d\}$, edges $ij \in E$ from $i$ to $j$, vertex biases $b_1, \ldots, b_d \in \mathbb{R}$, and weights $w_{ij} \in \mathbb{R}$ for each edge $ij \in E$. Let $\theta =(b_i, w_{ij})$ denote the vector of parameters. Self-loops and directed cycles are allowed in $G$.

\subsection{Dynamics}
\label{sec:dynamics}

Let $\mathcal{X} = \{0,1\}^d$ be the state space of binary vectors on $V$, and let $X(t)$ be a stochastic process indexed by continuous time $t \geq 0$ with states $\mathcal{X}$. Let $\tau > 0$ be the \emph{temperature}\footnote{Temperature can be interpreted as a proper physical temperature in the Boltzmann machine since its symmetric weights uniquely determine states' energy function, and its thermodynamics formulation can be adopted. Here, in {\it asymmetric} networks, temperature reflects the frequency in which stochastic updates take place ({\it intrinsic noise}). Zero temperature thus corresponds to deterministic dynamics, whereas high temperature drives rapidly fluctuating stochastic dynamics.}, and define $\flip_i(x)$ to be the binary vector $x$ whose $i$-th coordinate is flipped. 
To generate $X(t)$, we select an initial condition $X(0)$ from $\mathcal{X}$, and repeat the following two steps:

\begin{enumerate}
\item Suppose $X(T)=x$. For each $i \in V$, we sample a \emph{holding time} $\xi_i$ from the exponential distribution $\text{Exp}(\lambda_i)$ where $\lambda_i = \exp( \sigma_i z_i /\tau)$, $\sigma_i = 1-2x_i$, and $z_i =\sum_{ji\in E} w_{ji}x_j + b_i$.
\item Let $\xi_j$ be the smallest of the holding times $\xi_i$, and let $T' = T+\xi_j$. We set $X(t) =x$ for all $T\leq t < T'$, and $X(T') = \flip_j(x)$. Finally, we update $T \leftarrow T'$.
\end{enumerate}

More generally, a rate scaling parameter $r_i$ could be introduced in the rate $\lambda_i = r_i \exp(\sigma_i z_i/\tau)$. However, $r_i$ can be absorbed into $b_i$ and $w_{ii}$, see Supplementary Materials (SM) A. Thus, we can set $r_i = 1$ for simplicity and assume the graph $G$ has self-loops. We term $X(t)$ the \emph{McCulloch-Pitts network} (MPN) associated to $(G,\theta,\tau)$. Observe that the transitions in $X(t)$ involve flipping only one bit at a time.

Alternatively, we can reformulate an equivalent yet simpler algorithm to generate $X(t)$ using softmax; given an initial condition $X(0) \in \mathcal{X},$ we repeat the following steps:
\begin{enumerate}
\item Let $X(T)=x$. Sample $\xi \sim \text{Exp}(\lambda)$ where $\lambda = \sum_{i\in V} \lambda_i$.
\item Let $T' = T+\xi$ and set $X(t) = x$ for all $T\leq t< T'$. Pick $j \in V$ according to $P_i = \lambda_i/\lambda$, and set $X(T')=\flip_j(x)$. Finally, we update $T \leftarrow T'$.
\end{enumerate}

The multinomial distribution $P_i = \lambda_i/\lambda$ in step 2 is the softmax, since $P_i \propto \exp(\sigma_iz_i/\tau)$. 
\subsection{Training}
\label{sec:training}

For brevity, let us define the following data statistics:
\begin{gather*}
\delta_i^{(n)} = x_i^{(n+1)}-x_i^{(n)}, \,\,\sigma_i^{(n)} = 1-2x_i^{(n)}, \\ 
z^{(n)}_i =\textstyle \sum_{ji\in E} w_{ji}x^{(n)}_j + b_i, \\
\lambda^{(n)}_i = \exp\left(\sigma_i^{(n)} z^{(n)}_i/\tau\right), \,\,\lambda^{(n)} =  \sum_i \lambda^{(n)}_i.
\end{gather*}
The log-conditional likelihood, $\mathcal{L}_\mathcal{D}(\theta)$, measuring similarity between model distribution $X(t)$ and $\mathcal{D}(t)$, amounts to:
\begin{align*}
&  \log p\! \left( X(t) = \mathcal{D}(t) \, , 0< t \leq t_{N+1} \, | \, X(0) = x^{(0)} \right) = \\
& \sum_{n=0}^{N-1} \log p\!\left(X(t)=\mathcal{D}(t) \, ,  t_n< t \leq t_{n+1}  \, | \, X(t_n)=x^{(n)}\right) \\
&= \sum_{n=0}^{N-1} \log \!\left\{\frac{\prod_i (\lambda_i^{(n)})^{\big\vert \delta_i^{(n)}\big\vert}}{\lambda^{(n)}}  \lambda^{(n)} \exp\big[{(t_{n}-t_{n+1})}\lambda^{(n)}\big]\right\} \\
& = \sum_{n=0}^{N-1}\left\{\mathcal{T}_n(\theta) + \mathcal{H}_n(\theta)\right\},
\end{align*}
with \emph{transition terms} $\mathcal{T}_n(\theta)$ and \emph{holding terms} $\mathcal{H}_n(\theta)$:
\begin{align*}
\mathcal{T}_n(\theta) &= \textstyle\sum_{i=1}^d \delta_i^{(n)} z_i^{(n)}/\tau, \\
\quad\mathcal{H}_n(\theta) &= -(t_{n+1}-t_n)\textstyle \sum_{i=1}^d \lambda^{(n)}_{i}.
\end{align*}
Since $-\mathcal{L}_\mathcal{D}(\theta)$ is convex (it consists of sums and exponentiations of linear/convex functions), we will adopt gradient methods for parameter optimization. Straightforward calculation yields the gradients of $\mathcal{T}_n(\theta)$ and $\mathcal{H}_n(\theta)$: 
\begin{align*}
\frac{\partial \mathcal{T}_n}{\partial w_{jk}} &= \phantom{-}\frac{1}{\tau} x_j^{(n)} \delta_k^{(n)}, 
\frac{\partial \mathcal{T}_n}{\partial b_{k}} = \phantom{-}\frac{1}{\tau} \delta_k^{(n)},\\
\frac{\partial \mathcal{H}_n}{\partial w_{jk}} &= -\frac{1}{\tau} x_j^{(n)} \sigma_k^{(n)} \lambda_k^{(n)} (t_{n+1}-t_n),  \\
\frac{\partial \mathcal{H}_n}{\partial b_k} &= -\frac{1}{\tau} \sigma_k^{(n)} \lambda_k^{(n)} (t_{n+1}-t_n).
\end{align*}
Note that these rules are highly local, in the sense that the update for a synaptic weight $w_{ik}$ depends only on the presynaptic state $x_i$, the postsynaptic rate $\lambda_k$, and the postsynaptic transition $\delta_k$. Similarly, the update for the bias $b_k$ depends only on the rate $\lambda_k$ and transition $\delta_k$. Moreover, any parameter update that agrees in sign with the above gradients will increase both the transition and holding terms of the log-conditional likelihood. Therefore, the training algorithm is robust to noise (up to sign) in the parameter updates. See SM C for the training algorithm pseudocode.

\section{Spike-Timing-Dependent Plasticity}
\label{sec:biology}

Recent work in neuroscience \cite{ermentrout2010mathematical} has shown that individual neurons often do not behave as single spiking units. In fact, different compartments of a neuron, such as the apical and basal regions of a pyramidal neuron, can spike apart from the soma.  To model such neural dynamics with a McCulloch-Pitts network, we represent the compartments of a neuron by the vertices of the graph $G$, and refer to them as \emph{units} rather than \emph{neurons} to avoid confusion. 

We assume that each spiking unit $i$ is either in an {\it armed} state $x_i=0$, capable of spiking, or in a {\it refractory} state $x_i=1$, representing a unit that just spiked and is incapable of spiking again until it recovers. We assume that a unit influences the spiking of a neighboring unit if and only if it is in the refractory state, a reasonable condition also assumed in \cite{pecevski2016learning}. In addition, to account for a well-known observation that the transmission of spikes across synapses is unreliable \cite{faisal2008noise}, we model the random chance of spiking events that depend on synaptic strengths $w_{ij}$ and unit biases $b_i$ as stochastic processes. Specifically, the wait time to the \emph{next spiking event} of unit $i$ is assumed to be exponentially distributed with rate
$$
\lambda_i(\theta) = r_i \exp\left\{\sigma_i z_i/\tau\right\},$$
where
$$\sigma_i =1-2x_i, z_i=\sum_{ji \in E} w_{ji}x_j + b_i,
$$
and where the parameter $r_i$ controls the background spiking rate of the unit when   $z_i$ vanishes. Note that when the unit is armed, the expected time to fire exponentially decreases with $z_i$; it is thus tempting to call $z_i$ the ``membrane potential" because of its similarity to classical integrate-and-fire neurons. However, spike events are \emph{not} stochastic when the membrane potential is known \cite{zador1998impact}. To avoid confusion, we shall therefore call $z_i$ the \emph{unsigned activity} and $\sigma_i z_i$ the \emph{signed activity}. Note also that while most neuron models assume a fixed refractory period, in our model it is stochastic, given by an exponential distribution whose expected time to recovery is inversely proportional to the exponential of $z_i$.

Learning in the network is implemented by two sets of update rules. Let $\eta$ be the learning rate. When a spike occurs, we apply transition updates:
\begin{align}
\label{eq:updates}
\begin{aligned}
\Delta w_{jk} &= \frac{\eta}{\tau} x_j^{(n)} \delta_k^{(n)}, & 
\Delta b_{k} &= \frac{\eta}{\tau} \delta_k^{(n)}.
\end{aligned}
\end{align}
On the other hand, when there are no spikes over a period of time, we apply holding updates:
\begin{align}
\label{eq:updates2}
\begin{aligned}
\Delta w_{jk}&= -\frac{\eta}{\tau} x_j^{(n)} \sigma_k^{(n)} \lambda_k^{(n)} (t_{n+1}-t_n), \\
\Delta b_k &= -\frac{\eta}{\tau} \sigma_k^{(n)} \lambda_k^{(n)} (t_{n+1}-t_n).
\end{aligned}
\end{align}
We may also interpret the holding updates as a fixed-rate decay of the weights and biases over time. 

In Bi and Poo's experiments \cite{bi1998synaptic} on spike-timing-dependent plasticity (STDP), the behavior of a pair of interconnected neurons, in which the synaptic weight is not strong enough for a presynaptic spike to cause a postsynaptic spike, is studied. The presynaptic neuron is initially stimulated to spike, and the amplitude of the excitatory postsynaptic current (EPSC) is measured. After five minutes, the presynaptic neuron is stimulated  again to spike. Next, after $\varepsilon$ time, the postsynapse is stimulated to spike. This pair of stimulations is repeated every second for 60 seconds. Finally, after twenty minutes, the amplitude of the EPSC is measured again and the percentage change is recorded. 

We prove that the average case behavior of the learning algorithm for an MPN agrees with the experimentally-observed synaptic potentiation and depression in biological neural networks. To simplify the mathematical analysis, we shall make the following assumptions:

1. Consider a synapse with presynaptic and postsynaptic neurons $i,j$ with states $x_i, x_j$. Let $w$ be the synaptic weight (directed weight from $i$ to $j$), and $b_i, b_j$ the neural biases. Let $z_i = b_i, z_j=w x_i + b_j$ be the unsigned activities. Let $\lambda_i = r_i \exp (\sigma_iz_i/\tau ), \lambda_j = r_j \exp(\sigma_jz_j/\tau)$ be the firing rates, where we set the background rate $r_i = r_j = r,$ and $\tau$ is the temperature controlling the degree of stochasticity. 

2. The neurons do not spike unless they are manually stimulated, but they may recover from their refractory state on their own accord.

3. The weight $w$ is updated according to equations~\eqref{eq:updates}  and ~\eqref{eq:updates2}, but the biases $b_i,b_j$ are fixed.

4. The ratio $\eta/\tau$ is much smaller than 1.

5. The refractory periods are on average shorter than the armed periods. 
\medskip
\begin{theorem}
\label{thm:stdp} Assuming the above conditions, let $\varepsilon$ be the timing of the presynaptic spike subtracted from that of the postsynaptic spike. If $\varepsilon$ is small and positive, then the expected synaptic weight change is
$$
\mathbb{E}[\Delta w] \approx C_1 e^{-\lambda_i|\varepsilon|}, \quad \lambda_i = r e^{-b_i/\tau},
$$
but if $\varepsilon$ is small and negative, then the expected synaptic weight change is
$$
\mathbb{E}[\Delta w]  \approx -C_2 e^{-\lambda_j|\varepsilon|}, \quad \lambda_j= r e^{-(w+b_j)/\tau },
$$
where $C_1, C_2$ are positive constants.
\end{theorem}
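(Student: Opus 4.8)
The plan is to compute $\mathbb{E}[\Delta w]$ directly, by tracking the two-unit state $(x_i,x_j)$ through the stimulation protocol and summing the two kinds of weight updates, the transition rule~\eqref{eq:updates} and the holding rule~\eqref{eq:updates2}. The only spikes are the two stimulated ones (Assumption 2), so all randomness enters through the exponentially distributed recovery (refractory-to-armed) times, whose rates are read off from $\lambda = r e^{\sigma z/\tau}$ with $z_i = b_i$ and $z_j = w x_i + b_j$. Assumption 4 lets me treat $w$ as constant over a single pairing (the updates are $O(\eta/\tau)$ and do not feed back into the dynamics), so $\mathbb{E}[\Delta w]$ is linear in the update rules and I can take expectations over the recovery clocks term by term; Assumption 5 will be used to argue that holding contributions, which scale with refractory durations, are subleading to the one-shot transition contributions near $\varepsilon = 0$.

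First I would treat the causal case $\varepsilon > 0$ (presynaptic $i$ fires at $0$, postsynaptic $j$ at $\varepsilon$). The transition update on $w$ is nonzero only when $x_j$ changes, and at $j$'s stimulated spike it equals $\tfrac{\eta}{\tau}x_i\delta_j = \tfrac{\eta}{\tau}x_i$. Hence it contributes $+\tfrac{\eta}{\tau}$ precisely when $i$ is still refractory at time $\varepsilon$, an event of probability $e^{-\lambda_i\varepsilon}$ with $\lambda_i = r e^{-b_i/\tau}$ (note $i$'s recovery rate is independent of $x_j$ because $z_i=b_i$). This gives $\mathbb{E}[\Delta w]\approx \tfrac{\eta}{\tau}e^{-\lambda_i\varepsilon}$ and identifies $C_1$; the holding update accumulated over $[0,\varepsilon)$ in the state $(x_i,x_j)=(1,0)$ is $O(\varepsilon)$ and, by Assumption 5, contributes only a small linear correction to the leading exponential shape.

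Next is the anticausal case $\varepsilon < 0$ ($j$ fires first, $i$ at $|\varepsilon|$). Now $j$'s spike produces no change in $w$ (it occurs while $x_i=0$), and $i$'s later spike leaves $\delta_j=0$, so the leading signed contribution must come from the epoch after $i$ has fired, while $i$ is refractory ($x_i=1$): either the transition update $-\tfrac{\eta}{\tau}$ generated when $j$ subsequently recovers ($\delta_j=-1$ with $x_i=1$), or the holding updates accrued in the $(x_i,x_j)=(1,0)$ configuration. I would evaluate the expected contribution by conditioning on the ordering of the recovery clocks of $i$ and $j$ and reading off the relevant rates, where $j$'s recovery rate in the both-refractory configuration is $\lambda_j = r e^{-(w+b_j)/\tau}$, which is where $w$ enters. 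Collecting terms should yield a net negative $\mathbb{E}[\Delta w]\approx -C_2 e^{-\lambda_j|\varepsilon|}$ with $C_2>0$.

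The hard part will be this anticausal case. Unlike the causal case there is no single dominant one-shot update: the both-refractory holding update ($\sigma_j=-1$) is potentiating, while the transition update at $j$'s recovery and the $(1,0)$-holding update are depressing, so I must show the depressing terms win and extract the correct exponential from the competition of the recovery clocks. Getting both the sign of $C_2$ and the exact rate $r e^{-(w+b_j)/\tau}$ right, rather than the bare $r e^{-b_j/\tau}$ one naively reads off the inter-spike gap, requires disciplined bookkeeping of which unit is refractory on each sub-interval together with Assumptions 4 and 5 to discard the genuinely higher-order contributions; I expect this accounting, not any single estimate, to be the crux.
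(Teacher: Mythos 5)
Your overall strategy---condition on the ordering of the exponential recovery clocks relative to the two forced spikes, sum the transition and holding updates along each sample path, and linearize in $\eta/\tau$---is exactly the paper's (SM~B), and you correctly obtain the causal gating probability $e^{-re^{-b_i/\tau}\varepsilon}$ and correctly single out the anticausal competition as the crux. But there are two genuine gaps. First, in the causal case you stop the accounting at the postsynaptic spike. The paper's central device is a preliminary lemma: starting from the both-refractory state $(x_i,x_j)=(1,1)$ with weight $w_1$, the expected \emph{total subsequent} update (holding updates while both are refractory, the $-\eta/\tau$ transition update if $j$ recovers first, and the holding updates while $j$ is armed but $i$ still refractory) is $-(\eta/\tau)\varphi(w_1)$ with $\varphi(w_1)\approx \gamma_i/(\tilde\gamma_i+\tilde\gamma_{j1})$, and Assumption 5 is used precisely to get $\varphi(w_1)<1$. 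This relaxation term is $O(1)$ in $\varepsilon$, not $O(\varepsilon)$: it rides on the same gating event as your $+\eta/\tau$ transition contribution and reduces the coefficient to $C_1=(\eta/\tau)\bigl(1-\varphi(w_1)\bigr)$. Without it you have not shown the potentiating branch is net-potentiating; your use of Assumption 5 to discard the holding terms over $[0,\varepsilon)$ (which really are $O(\varepsilon^2)$ and harmless) is not the role that assumption actually plays.

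Second, in the anticausal case your proposed source of the exponential rate does not produce the theorem's claim. The $\varepsilon$-dependence comes from the survival probability that the postsynaptic unit is \emph{still refractory} when the presynaptic spike arrives at $t=|\varepsilon|$ (if it has already recovered, the remaining contribution is only $O(\varepsilon)$); the paper assigns that survival the rate $\tilde\gamma_j=re^{-(b_j+w)/\tau}$, which is the $\lambda_j$ of the statement. The both-refractory configuration you point to is entered only \emph{after} the gap and therefore feeds the prefactor $C_2=(\eta/\tau)\varphi(w_1)$ via the same lemma, not the exponential rate. (Incidentally, during the gap $x_i=0$, so a literal reading of the model gives $z_j=b_j$ and rate $re^{-b_j/\tau}$; the $w$-shifted rate is what the paper's SM nonetheless uses and what the theorem states.) So executing your plan requires the $\varphi$-lemma in both branches and a different placement of the $w$-dependent rate than you describe; the depression-versus-potentiation competition you flag as the hard part is resolved in the paper by the single computation $0<\varphi(w_1)<1$.
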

\begin{proof}
See SM B.
\end{proof}

\section{Experiments}
\label{sec:experiments}

In this section, we showcase the biologically plausible features and capabilities of the MPN\footnote{The codes are available at \url{https://github.com/owen94/MPNets}. }.
We first numerically verify the prediction of Theorem \ref{thm:stdp}, reproducing STDP-like curves from our learning rule. 
The MPN is then shown to be self-consistent, i.e. accurate inference of its own generative model is achievable. 
We next demonstrate the ability of MPN to robustly memorize spatiotemporal patterns. Namely, we show that our model can be trained to memorize repeating sequences of binary vectors with one-hop transitions and that these memories are robust in the sense that they are stable attractors of the MPN dynamics in the deterministic limit. 
When the dynamics are stochastic, we show that stochasticity assists in multiple-memory switching, as opposed to the deterministic counterpart whose dynamics eventually fixes at only one memory. 
Finally, we end this section by highlighting two potentially useful sequence learning applications: memorizing a long sequence of binary pictures (shown in Fig. \ref{fig: sequence} and the video in SM) and 
learning a generative model of experimental neural spike-train data.  The latter demonstrates that MPNs can effectively reproduce spike-timing statistics in neurobiology experiments.

\begin{figure}[t]
  \centering
   {\includegraphics[scale=.5]{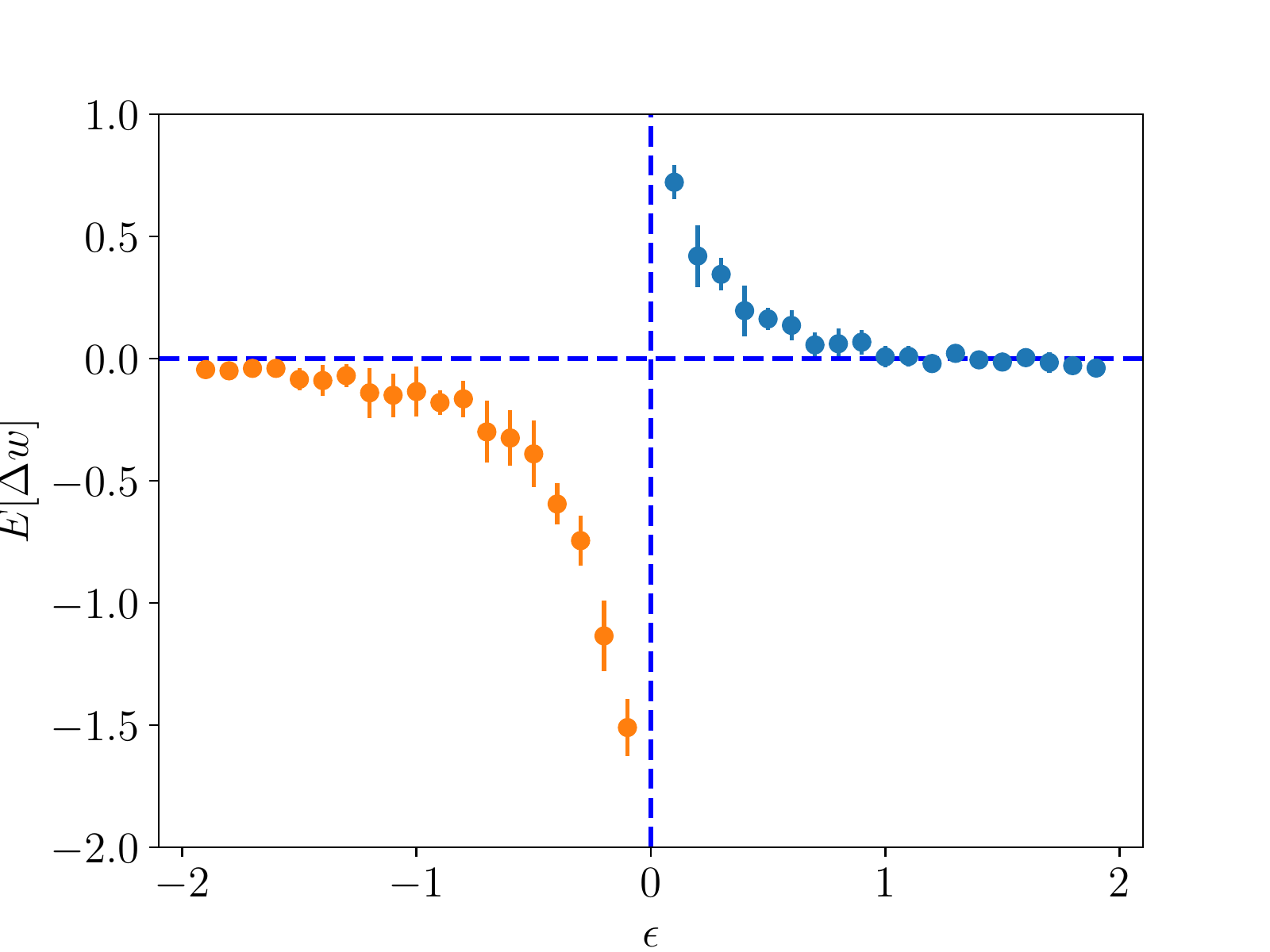}} 
   \caption{STDP-like curves emerge from simulating an MPN on a pair of neurons, in agreement with Theorem \ref{thm:stdp}. We denote the synaptic weight change as $\Delta w = \hat{w}_{ij} - w_{ij}$, and $\mathbb{E}(\Delta w)$ is computed from averaging 10 trials. Vertical dashed lines at each data point are error bars.}
  \label{stdp}
\end{figure}

\subsection{STDP and MPN Learning Rule }
\label{sec: STDP}
By simulating the dynamics of a presynaptic neuron $i$ and a postsynaptic neuron $j$, we now verify the emergence of STDP curves from our MPN learning rule. 
Following the algorithm discussed in the Spike-Timing-Dependent Plasticity section, 
the expected synaptic weight changes $\Delta w_{ij}$ can be measured as a function of the spike timing difference $\epsilon$. 

We simulate 10 trials for each $\epsilon$, while for each trial the updates of 60 consecutive spikes are accumulated, as suggested in \cite{bi1998synaptic}, to finally compute the new synaptic weight $\hat{w}_{ij}$. 
When there are no longer weight updates for every spike, the update will be terminated. 
Here, we set the learning rate for the transition and holding updates to $\eta_{\mathcal{T}} = 0.05$ and $\eta_{\mathcal{H}} = 0.001,$ respectively. Other parameters are $w_{ij} = 1, b_i = b_j = 0$ . 

Two initial conditions are considered: either neuron $i$ is in the refractory state and neuron $j$ is in the armed state or vice versa. 
Fig.\ref{stdp} displays simulation results for both initial conditions, which reveals that the expected (average) weight change consists of two exponential branches as a function of $\epsilon$, in agreement with Theorem \ref{thm:stdp}. This result is consistent with the STDP rule discussed in \cite{bi1998synaptic}, and the two branches do not need to be symmetric as assumed in \cite{scellier2017equilibrium}.

\begin{figure}
  \centering
  {\includegraphics[scale=0.15]{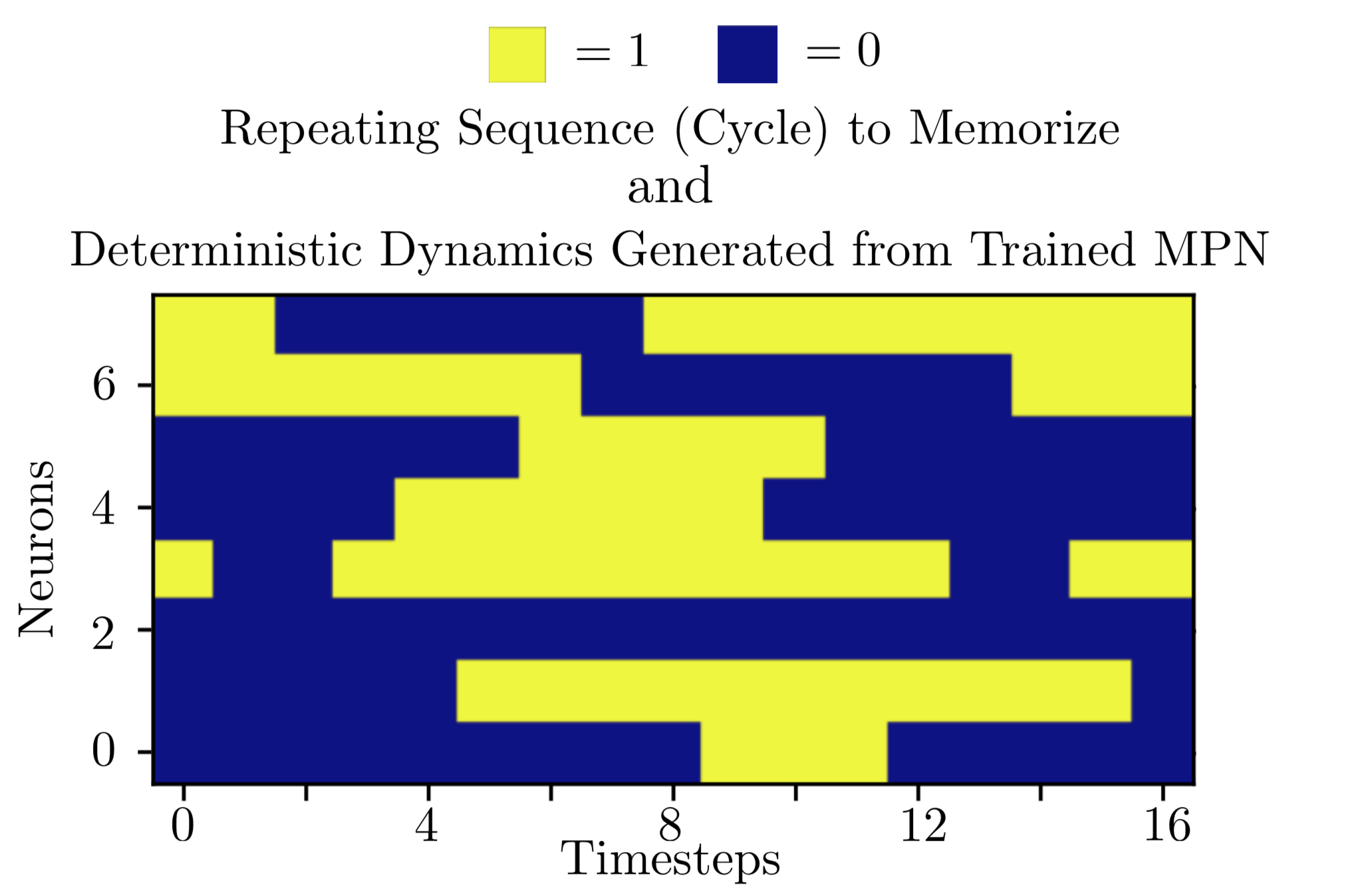}} 
  {\includegraphics[scale=0.25]{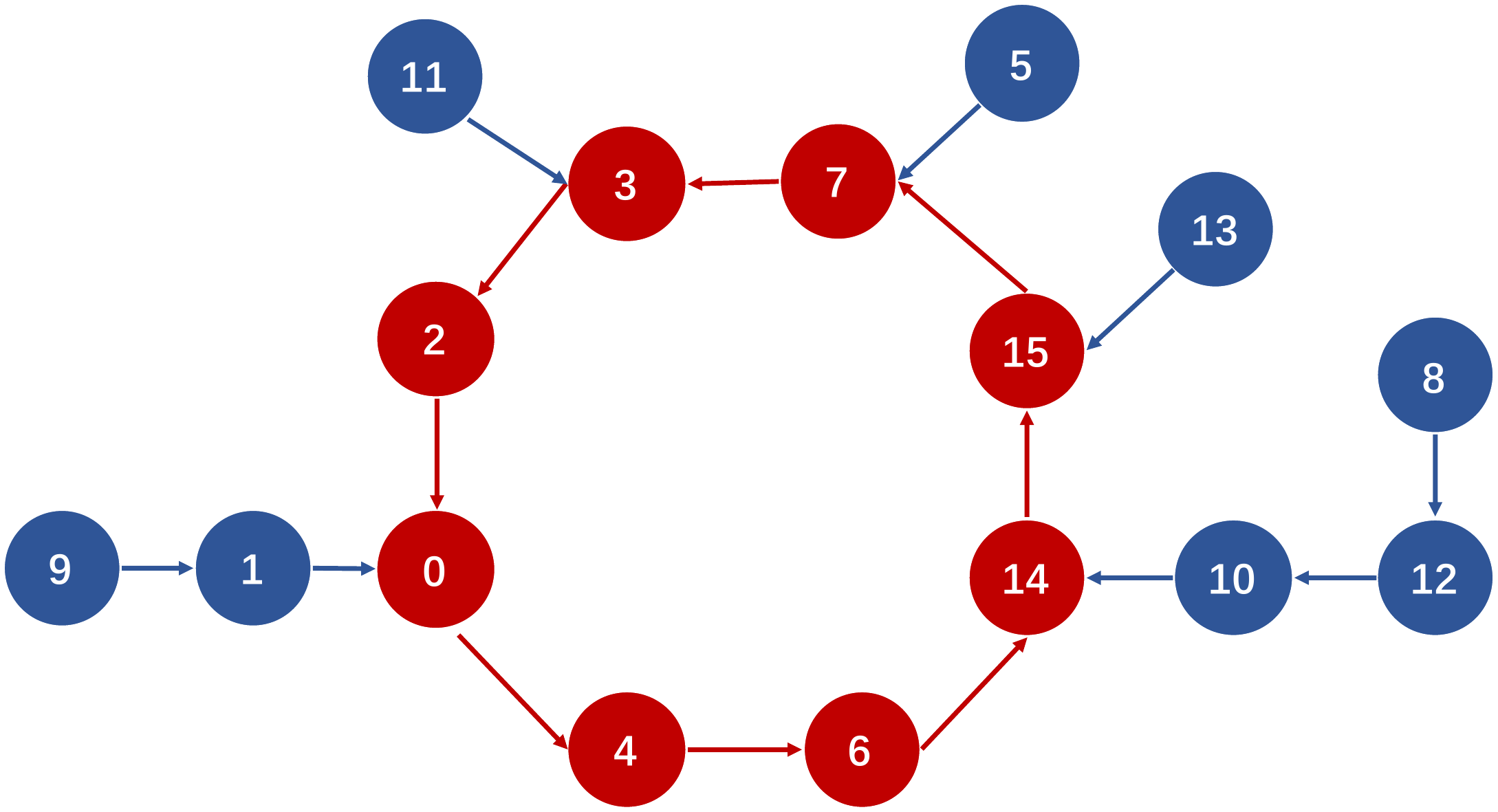}} 
   \caption{Deterministic dynamics of a trained MPN robustly memorize spatiotemporal binary patterns. For an MPN with 8 neurons, the cycle of period 16 to be memorized can be perfectly reproduced from the deterministic dynamics generated by the trained MPN {\bf (top)}. In addition, a spatiotemporal memory is {\it robust}, as illustrated in the smaller state space dynamics generated by another trained MPN with 4 neurons {\bf (bottom)}. The number on each node is the decimal representation of the corresponding state binary vector of length 4; e.g., $(1,0,1,0)$ is 5. Blue nodes are transient states that flow along the arrow toward the robust spatiotemporal memory, i.e. stable periodic attractor (red).}
  \label{cycles}
\end{figure}
\begin{figure*}[t]
  \begin{center}
  {\includegraphics[scale=.4]{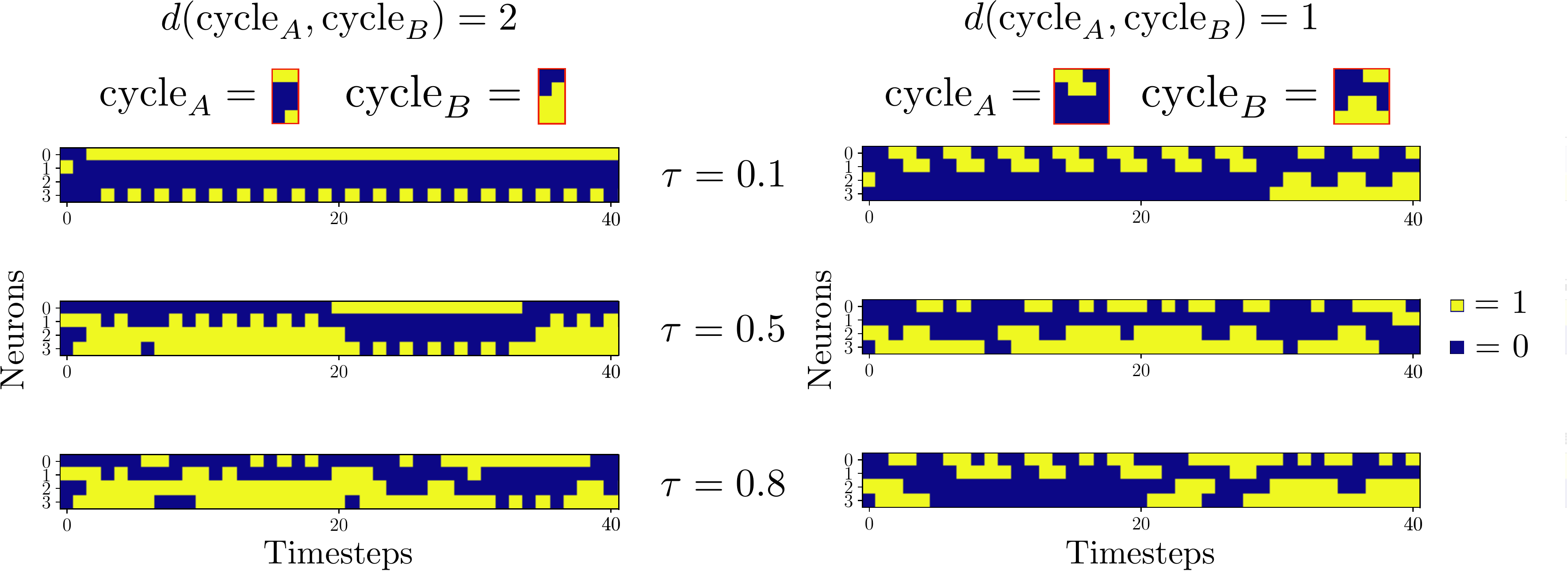}}
     \caption{40 states transitions under different noise parameter $\tau$ generated by an MPN with 4 neurons trained to memorize two cycles in the deterministic limit: \textbf{(left)} stochastic dynamics generated by an MPN that is trained to memorize cycles of period 2. \textbf{(right)} stochastic dynamics generated by an MPN that is trained to memorize cycles of period 4; see Fig. \ref{fig: 2cycles_topology} in SM for topologies. $d( \text{cycle}_A, \text{cycle}_B)$ denotes the minimal hamming distance between the two cycles, which measures how far away the stored spatiotemporal memories are. The larger the $d$, the more rare bit flips (induced by intrinsic noise) are required to transit to the other memory, and hence memory switching becomes less likely when $d$ is large. At the same $\tau$, we thus observe less frequent memory switching in \textbf{(left)} than in \textbf{(right)}.}
  \label{fig: 2cycles}
  \end{center}
\end{figure*}

\textbf{Consistency Check:} Here, we numerically show
that MPNs can consistently learn the parameters of another MPN that generates training samples. 
100,000 samples were generated from the pre-defined MPN of 10 neurons with weights $w$ and biases $b$. Next, we train an MPN to fit the generated data. 
For the reconstruction error using $\ell_1$ norm defined by $error(w, \hat{w}) = \frac{1}{100}\| w - \hat{w} \|$, we find $error(w, \hat{w}) = 0.03$ after 30 epochs of training. This consistency check shows that MPN is self-consistent. Note also that the training algorithm converges fast, plateauing  after 5 training epochs (see SM D.1).

\subsection{Robust Spatiotemporal Memory}
\label{sec: robustmemory}
\textbf{Spatiotemporal memory:} The advantage of weight asymmetry is the flexibility to memorize repeating {\it sequences} of spikes ({\it cycles}) regarded as spatiotemporal memories. 
Here, we show that MPN can be trained to memorize a cycle using the deterministic update rule (the noiseless limit is identical to setting $\tau = 0$). This update rule amounts to flipping the neuron $i$ with the highest rate, i.e., $i = \arg \max \lambda_{j}$. Fig.\ref{cycles} (top) shows the cycle of period 16 to memorize, which is the repeating sequence to be stored by an MPN with 8 neurons. After training with this sequence, MPN can successfully generate the same sequence (reproducing Fig.\ref{cycles} (top)) from the dynamics in the limit $\tau = 0,$ provided the initial states of the spikes are identical.  Hence, MPN can store the sequence as a spatiotemporal memory.

\textbf{Robustness:} This repeating sequence memorized by the trained MPN with 8 neurons is in fact the stable attractor of the entire state space dynamics generated by the trained MPN. Namely, any $2^8 = 256$ initial binary states will eventually flow toward the cycle in Fig.\ref{cycles} (top). Thus, this spatiotemporal memory is {\it robust}.
Fig.\ref{cycles} (bottom) illustrates robust spatiotemporal memory in the state space dynamics generated by the trained MPN with 4 neurons, where the entire state space consists of $2^4 = 16$ elements. All transient states (blue) are in the basin of attraction of the spatiotemporal memory (red), which is the repeating sequence $[0, 4, 6, 14, 15, 7, 3, 2, 0]$.

  \begin{figure*}[t]
    \begin{center}
  {\includegraphics[scale=0.5]{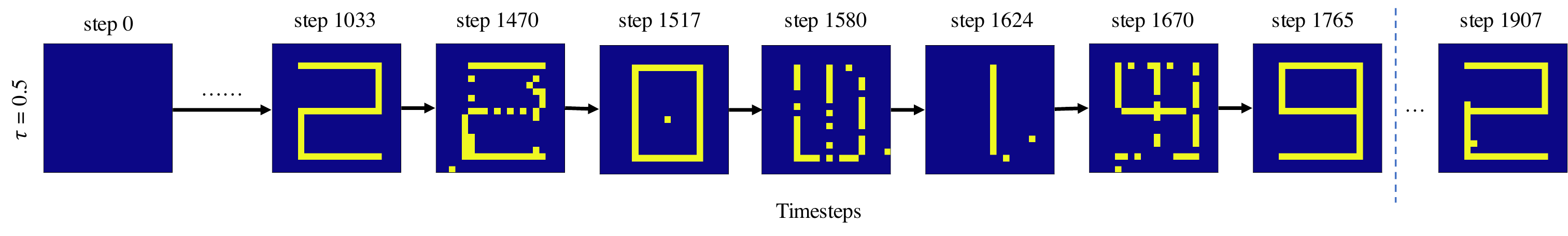}}
  {\includegraphics[scale=0.5]{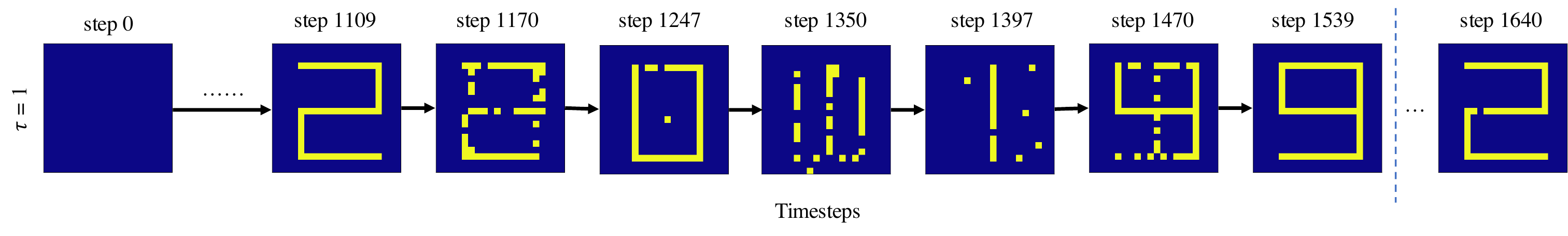}}
     \caption{Spatiotemporal pattern completion: dynamics generated by the trained MPN with $\tau=0.5$ and $\tau=1$ can robustly recover the multi-hop away memory sequence 2019. Results for $\tau = 2$ can be found in the SM.}
  \label{fig: sequence}
  \end{center}
\end{figure*}

\begin{figure*}[t]
\begin{center}
   {\includegraphics[scale=0.2]{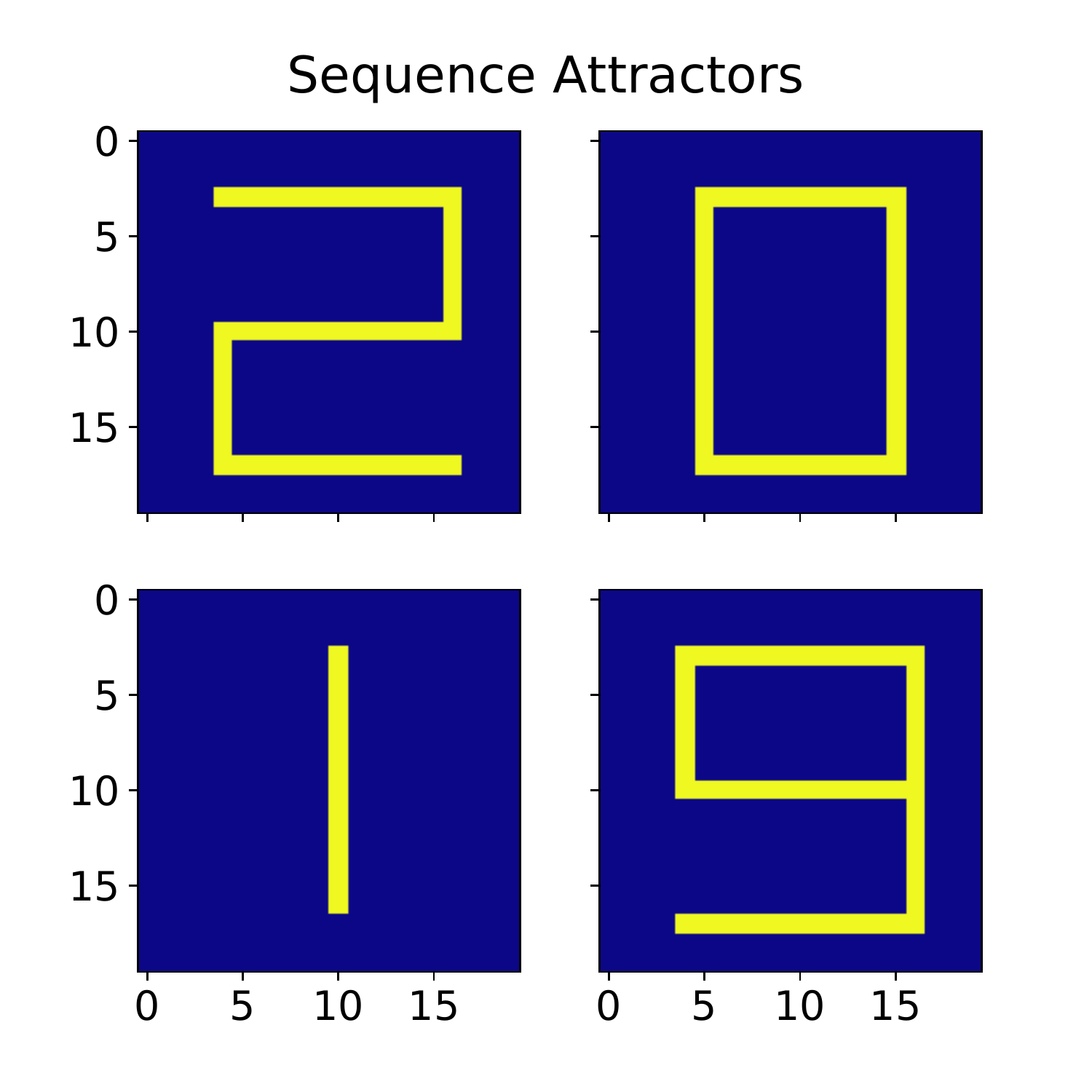}}
    {\includegraphics[scale=0.21]{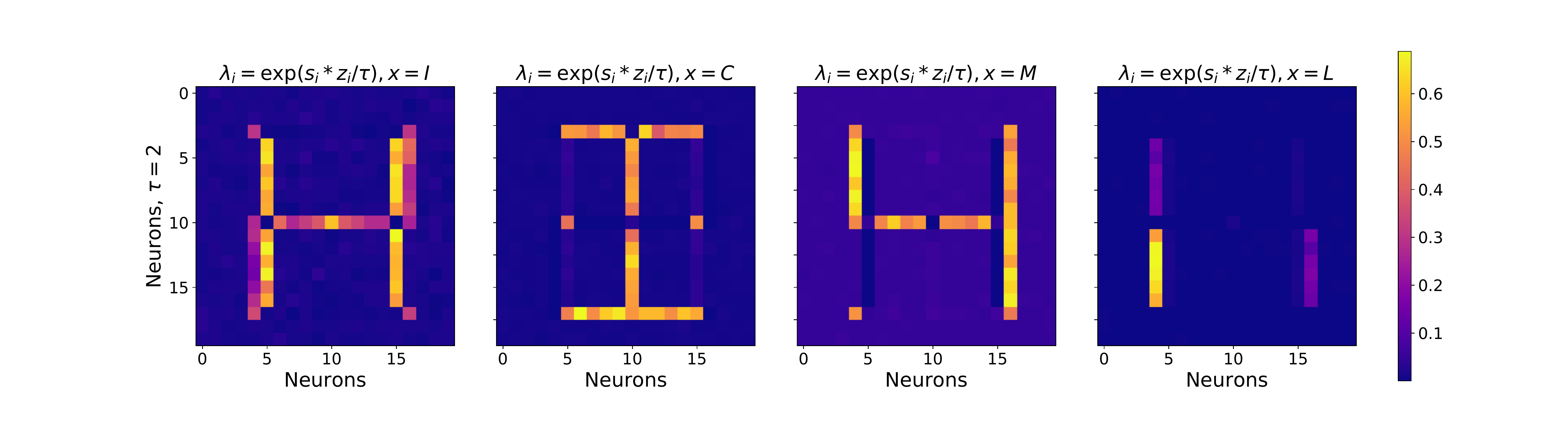}}
  \caption{\textbf{(Left)}. Multi-hop memory sequence to store that proceeds from ``2" to ``0" to ``1" and to ``9". \textbf{(Right)}. Rates $\lambda_i$ computed for different input spiked patterns; left to right: ``2", ``0",``1" and ``9".  Larger $\lambda_i$ indicates a higher probability for the neuron $i$ to flip. }
  \label{fig: multi-hop sequence}
  \end{center}
\end{figure*}

\textbf{Multiple Memories and the Role of Noise:} It is possible to train an MPN to store multiple spatiotemporal repeating patterns, analogous to standard Hopfield nets that can store multiple static patterns.
Here, we demonstrate the simplest scenario where 2 non-overlapping cycles are stored as robust spatiotemporal memories in a small network with 4 neurons. Generalization to multi-cycle storage in larger networks is straightforward and will be reported elsewhere. 

In these trainings, the first cycle to be memorized, say $\text{cycle}_A$, is selected to train an MPN.
After the training is completed, we select another one, $\text{cycle}_B$, that is disjoint from $\text{cycle}_A,$ and train the MPN that was originally trained on $\text{cycle}_A$. Eventually, the MPN can store {\it both} $\text{cycle}_A$ and $\text{cycle}_B$. 
We illustrate two trained MPNs: one that stores 2 cycles of period 2, and another one that stores 2 cycles of period 4. 
The deterministic state transition diagram generated by these MPNs is shown in Fig.\ref{fig: 2cycles}; the state space consists of two cycles of period 2 in Fig.\ref{fig: 2cycles}(a) 
which are $\text{cycle}_A = \{(1,0,0,0), (1, 0, 0, 1) \}$ and $\text{cycle}_B = \{(0,0,1,1), (0, 1, 1, 1) \}$, 
and two cycles of period 4 in Fig.\ref{fig: 2cycles}(b), which are  
$\text{cycle}_A = \{(1,0,0,0), (1, 1, 0, 0), (0, 1, 0, 0), (0, 0, 0, 0) \}$ and $\text{cycle}_B = \{(0,0,0,1), (0, 0, 1, 1), (0,1,0,0), (0,1,1,0) \}$. This notation represents an ordered set, in which a sequence generated by the trained MPN proceeds from left to right.

In the presence of noise ($\tau > 0$), fluctuations may destroy the robustness of cycles. Namely, rather than remaining in one of the  cycles (deterministically stable periodic attractors), states could be driven away from one robust cycle to another through a series of rare one flip transitions.  Fig. \ref{fig: 2cycles} shows the response of the trained MPNs as the strength of intrinsic noise $\tau$ increases. 
Thus, one useful role of noise in MPN is to facilitate transitions between stored spatiotemporal patterns of spikes that would not otherwise be possible in the deterministic limit. Whether one could exploit this intrinsic noise-induced memory switching phenomena to design controllable memory switching networks is worthy of future investigation.

\subsection{Application: Robustly Memorizing Sequences of Binary Pictures}

Robust spatiotemporal memory is an appealing feature of the MPN; however, successive elements of a preferable sequence to store might not be one-hop different as assumed in the previous section. For instance, the successive elements in the sequence of digits ``2019" represented as binary pictures of Fig. \ref{fig: multi-hop sequence} (left) differ by multi-hop transitions.
Here, we show that MPNs can be trained to learn sequences of pictures that differ by multi-hop transitions, illustrated in Figs.\ref{fig: sequence}- \ref{fig: multi-hop sequence}, and also in the video in SM.

The multi-hop away sequence to learn consists of [the all-zero state, ``2", ``0", ``1", ``9"], in this specific order. Each static binary picture in the sequence is represented as a 20 $\times$ 20 binary matrix.
 To transit among these static memories in the correct order, we randomly assign 100 one-hop transitions between every successive pair of the static memories; i.e., the length for the entire training sequence, including the transitions from the all-zero state to ``2", ``2" to ``0", ``0" to ``1" and ``1" to ``9", is 400. The learning rate is set to $\eta = 0.001$. 
For details on training procedure, see SM D.3.

Fig.\ref{fig: sequence} shows snapshots of the dynamics at two different noise parameters $\tau$ generated from the trained MPN.
Initialized with the all-zero state, the trained MPN almost perfectly reproduces the multi-hop sequence ``2019" with multiple one-hop transitions that connect consecutive elements.

This generated sequence is also robust to noise, in the sense that we can recognize the sequence of static memories ``2", ``0", ``1", and ``9" as time evolves for different noise levels. 
As a consequence of our training procedure, the trained MPN will generate the cycles containing multi-hop away memories in the correct order; i.e., the dynamics will loop ``2" $\rightarrow \cdots \rightarrow$ ``0"  $\rightarrow \cdots \rightarrow$ ``1"  $\rightarrow \cdots \rightarrow$ ``9"  $\rightarrow \cdots \rightarrow$ ``2" $\rightarrow \cdots$ (see Fig.\ref{fig: sequence}). 

We further examine the rate $\lambda_i(\theta) = r_i \exp\left\{\sigma_i z_i/\tau\right\} $ to illustrate that the MPN indeed learned the sequence. Given the weights and biases of the trained MPN, we compute the rates when the static memories are the input. Since the neuron with a higher rate $\lambda_i$ will be more likely to flip next, we can estimate the preferable transitions that will take place given $\lambda_i$. 
For instance, when feeding the trained MPN with the input ``2", most rates associated with the spiked neurons of the memory ``0" and ``2" become significanly higher than the rates at the other neurons, see Fig.\ref{fig: multi-hop sequence}. This is consistent with the fact that ``2" will transit to ``0" through multi-hop transitions and that the spiked neurons of ``2" will recover with higher rates than those neurons remaining in the armed state.  However, the neurons that spike at {\it both} the memory ``2" and ``0", i.e. the bottom horizontal line, will prefer to not flip, consistent with the almost negligible rates at the neurons that both memories already share.   

Note finally that while Hopfield networks can be used for recalling or denoising static memories \cite{hopfield1982neural,hillar2018robust}, we have shown that MPNs can be applied to recall spatiotemporal memories or complete spatiotemporal patterns.
Thus, the MPN with asymmetric weights can be regarded as an STDP-consistent and non-equilibrium generalization of the dynamics arising from the symmetric Hopfield network. 
Regarding weight asymmetry, we observe from the trained MPN that the learned weights are considerably asymmetric. The asymmetry measure $|w_{ij} - w_{ji}| /( (|w_{ij}| + |w_{ji}|)/2) = 1.38, 1,53, 1.58$ for training parameters  $\tau = 0.5, 1, 2$, respectively. 


\subsection{Application: Inferring Generative Models for Neural Spike-train Data}

We now apply the MPN to learn spike-train statistics of a neurobiology experiment, which reports the neuronal activities of cat primary visual cortex \footnote{The dataset is available on \url{https://crcns.org/data-sets/vc/pvc-3}}. The experiment recorded spike-train dataset of 25 neurons. We preprocess the data and form 60288 one-hop transitions, of which $70\%$ (42202) is used for training and $30\%$ (18086) for testing. 
The learning rate is set to $\eta = 0.01$.


To see how well the MPN reproduces the spike-timing statistics of the data, we generate spike trains (from the MPN trained at $\tau = 1$) that last for the same duration as that of the test data, starting from the same initial spike state.
Fig.\ref{lld} shows that the distribution of the inter-spike intervals (ISI) of the generated spikes is in good agreement with that of the test data. 
The visiting frequencies of generated spike states are also in agreement with that of the test data as well (see Fig.6 in SM D.4). 

We now compare the performance of MPN in reproducing the ISI statistics to that of the generalized linear model (GLM), a standard statistical model of ISI, using the Kullback-Leibler(KL) divergence as a performance metric. We first bin the ISI into 20 slots, compute the density of each bin, and obtain the distributions $P_{test}(ISI), P_{MPN}(ISI), P_{GLM}(ISI)$. Then, the KL-divergences are calculated from $KL(MPN) = KL(P_{MPN}(ISI) \| P_{test}(ISI))$ and $KL(GLM) = KL(P_{GLM}(ISI) \| P_{test}(ISI))$, where $KL(p\|q) = - \sum_{x} p(x) \log (p(x)/q(x))$. Remarkably, we found that $0.21 \pm 0.11 = KL(GLM) >  KL(MPN) = 0.04 \pm 0.003,$ where the standard deviations are calculated from 10 trials. Thus, MPNs can capture the statistics of ISI even more accurately than the ubiquitous GLM.
\begin{figure}[t]
  \centering
   {\includegraphics[scale=0.5]{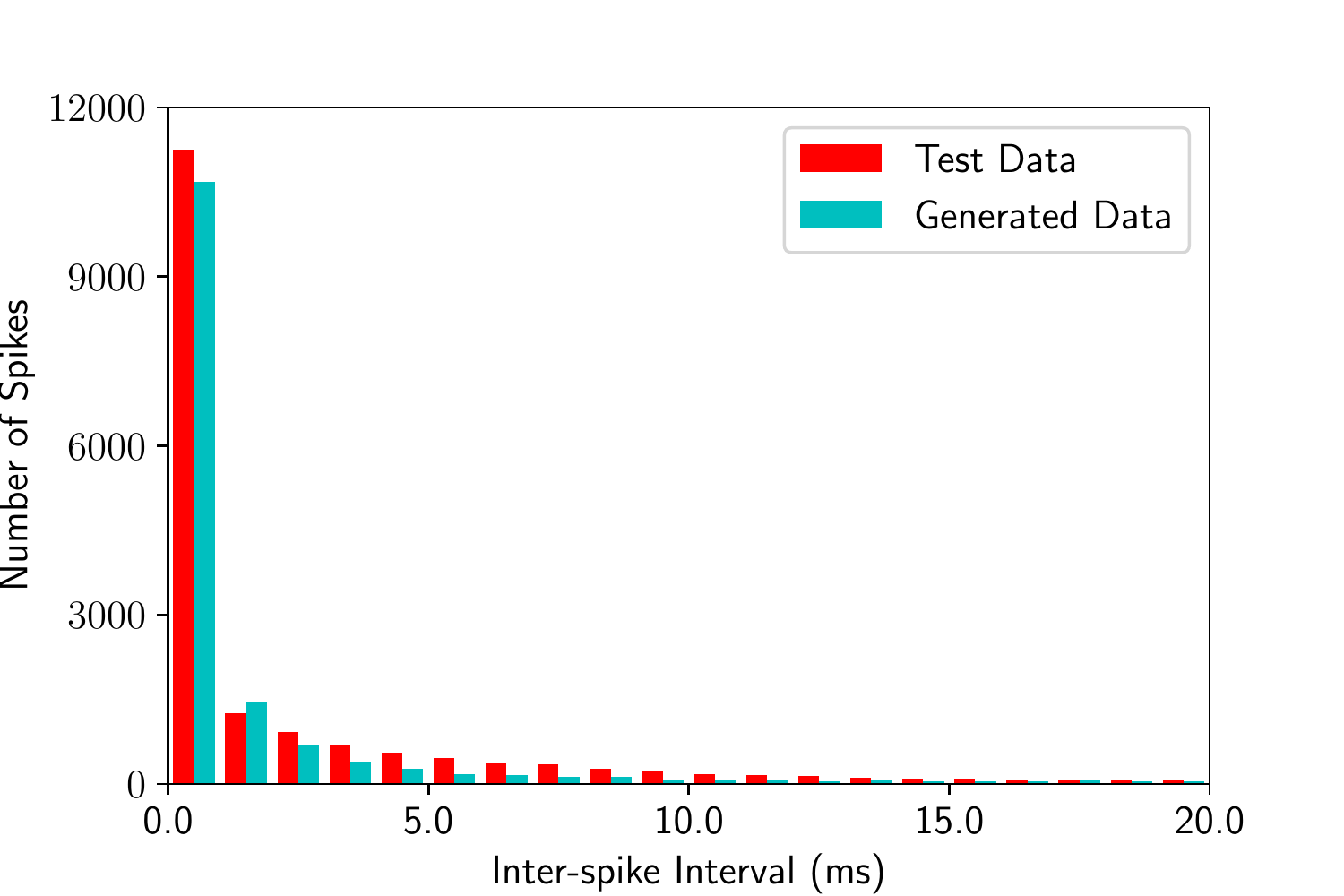}}
   \caption{Histogram of the ISI of the spikes generated from the trained MPN is in excellent agreement with that of the test data.}
  \label{lld}
\end{figure}

\section{Related Work}
\label{sec:related_work}

While sequence learning tasks have been studied in the literature \cite{amari1972learning,brea2013matching,memmesheimer2014learning}, the work most related to MPN is \cite{brea2013matching}, which focuses on training spiking neural networks with biologically plausible rules. While both \cite{brea2013matching} and this work attempt to learn the distribution of sequences, they differ in the model, the learning rule, and the applications. 
Regarding model assumptions, \cite{brea2013matching} assumes a firing probability to be sigmoidal during $T$ discrete-time bins; MPN, however, assumes a continuous-time neuron model with exponentially distributed refractory periods 
and stochastic neurons. Such different models and assumptions lead to distinct learning rules.
Finally, while \cite{brea2013matching} exploits hidden units to learn a sequence, the MPN reported here consist of only visible neurons. 
We expect that, by adding hidden neurons, MPN will become even more expressive. We leave hidden neurons problem to be explored in future work.  

Biological plausibility \cite{rezende2011variational,bengio2015towards,bengio2017stdp,scellier2017equilibrium,liu2018variational} has also been emphasized in the machine learning community, with the hope to uncover efficient learning algorithms inspired by neuroscience. However, many proposed STDP-consistent algorithms assume symmetric weights, a constraint generally considered biologically implausible. For example,  \cite{scellier2017equilibrium} introduces a novel learning paradigm termed Equilibrium Propagation that offers a biologically plausible mechanism for backpropagation and credit assignment in Deep Learning. The learning algorithm implements a form of STDP for neural networks with symmetric weights, i.e. $w_{ij} = w_{ji}$, such that, in one of the learning phases, the weight update rule satisfies
$\frac{\partial w_{ij}}{\partial t} \propto \rho(u_i) \frac{d\rho(u_j)}{dt}+\rho(u_j) \frac{d\rho(u_i)}{dt}$, where $u_i$ and $\rho(u_i)$ are the membrane potential and its firing rate. In contrast, our model possesses a different weight update rule (see Spike-Timing-Dependent Plasticity section), 
does not pose the symmetric weight condition, and can reproduce asymmetric STDP-like curves (see the STDP and MPN Learning Rule section).  

Lastly, our learning algorithm is partly inspired by Minimum probability flow (MPF) \cite{Sohl-DicksteinICML,hillar2012efficient}, a symmetric-weight discrete-time algorithm that efficiently learns distributions of static variables. 
Although the update rule of MPF is akin to the the transition updates of an MPN with symmetric weights, an asymmetric-weight continuous-time MPN can also learn distributions of sequences. 

\section{Conclusion}
\label{sec:conclusion}

We introduce a biologically plausible continuous-time sequence learning algorithm for spiking neural networks and demonstrate its capability to learn neural spike-train data as well as to robustly store and recall spatiotemporal memories. The hallmark of our local learning rule is that it provides a normative explanation of STDP. There are several directions for future investigations. Firstly, akin to the Hopfield network capacity problem,  we could explore the capacity of MPNs for storing random repeating sequences. More interestingly, given a set of transitions between states that is not necessarily a cycle, is it feasible to train the network to memorize those transitions? The answer could hint at whether MPNs possess large long-sequence storage capacity, a highly sought-after characteristic of modern sequence learning models. Furthermore, while the current MPN is a fully-visible network, extensions to a deep network with hidden layers for large-scale computer vision and natural language tasks would be worth investigating. Lastly, implementing the local learning rule on neuromorphic computing hardware could be a near-term application of our neuro-inspired framework. 

\section{Acknowledgement}
\label{sec: acknowledgement}
Zuozhu Liu is supported by  the Singapore MOE grant MOE2016-T2-2-135 under PI Alexandre Thiery. Thiparat Chotibut and Shaowei Lin are supported by SUTD-ZJU grant ZJURP1600103. We thank Andrew Tan, and Andreas Herz for helpful discussions. We also thank the reviewers of the 34th AAAI Conference on Artificial Intelligence for their helpful comments on this work.
\bibliography{example_paper}
\bibliographystyle{aaai}

\begin{appendix}

\section{Supplementary Material (SM)}

\section{A: Rate Scaling}
\label{app:scaling}

To construct a physically consistent model, the rate $\lambda_i$ introduced in the McCulloch-Pitts Networks section of the main text must have the unit of the inverse of time. This can be done by introducing the parameter $r_i$ with an appropriate unit: 
\begin{align*}
	\lambda_i &= r_i\exp(\sigma_i z_i /\tau)\\
	&=\exp[\sigma_i z_i /\tau + \ln(r_i)].
\end{align*}
We now show that introducing such an extra parameter is amount to correcting $b_i$ and $w_{ii}$ by appropriate $r_i$-dependent factors. 
Consider the exponent $\sigma_i z_i /\tau + \ln(r_i).$ Since $\sigma_i^2 = (1-2x_i)^2 = 1$ for a binary neuron's state $x_i \in \{0,1\},$ 
the exponent can be rewritten as:
\begin{align*}
\sigma_i z_i /\tau + \ln(r_i)  &= \sigma_i \Big(z_i + \sigma_i\tau\ln (r_i) \Big)/\tau \\
&= \sigma_i \tilde z_i /\tau,\\
\end{align*}
where the new input activity $ \tilde z_i $ is related to the input activity $z_i$ by the shift:
\begin{equation*}
	\tilde z_i = z_i + \sigma_i \tau \ln(r_i).
\end{equation*}
In terms of the weights and biases, the new activity reads:
\begin{align*}
\tilde z_i &= \left(\sum_{j} w_{ji}x_j + b_i\right) + (1-2x_i)\tau \ln(r_i)\\
&= \sum_{j} \tilde w_{ji}x_j + \tilde b_i,
\end{align*}
where the new weights and biases are given by:
\begin{align*}
 \tilde w_{ij} &= \begin{cases}
    w_{ij}, & \text{if $i \neq j$}.\\
    w_{ii} - 2\tau\ln (r_i) , & \text{otherwise}.
  \end{cases}\\
\tilde b_i &= b_i+\tau \ln (r_i). \\
\end{align*}

Thus, with appropriate redefinition of $w_{ii}$ and $b_i$ in mind, we can simply work with $\lambda_i = \exp(\sigma_i \tilde z_i /\tau)$ as introduced in the main text without loss of generality. 
\section{B: Proof of Spike-Timing-Dependent Plasticity}
\label{app:proofstdp}

We now prove Theorem \ref{thm:stdp} 
\begin{proof}
First, we compute the expected synaptic update to the directed weight $w$ that sends the signal from the presynaptic neuron $i$ to the post synaptic neuron $j$, provided the initial states at time $t=0$ are $x_i = x_j = 1$, and the initial synaptic weight is $w_1$. In this scenario, we will show that the expected synaptic update is $-(\eta/\tau)\varphi(w_1)$ for some function $\varphi$ to be determined; this result will later later used. We assume that neurons do not spike again after they recover. Note that after the presynaptic neuron recovers, there are no other synaptic updates because $x_i =0$. 

Let us first define the firing rates which will appear frequently in the calculation:
\begin{gather*}
\gamma_{i} = re^{b_{i}/\tau}, \quad \tilde{\gamma}_{i} = r/\gamma_{i},\\
\gamma_{j1} = re^{(b_{j}+w_1)/\tau}, \quad \tilde{\gamma}_{j1} = r/\gamma_{j1}.
\end{gather*}
These rates are defined given the initial conditions $x_i = x_j = 1.$ Then, there are two cases to consider, as outlined below. For each case, we derive the resulting synaptic update. 
\begin{enumerate}
\item[C1.] The presynaptic neuron recovers {\it before} the postsynaptic neuron. The probability of this event occurring is $\tilde{\gamma}_{i} / (\tilde{\gamma}_{i}+\tilde{\gamma}_{j1})$. Let $\varepsilon_i$ be the time of the presynaptic recovery, then this holding time will be sampled from 
\begin{align*}
\varepsilon_i \sim \text{Exp}(\tilde{\gamma}_{i}+\tilde{\gamma}_{j1}). 
\end{align*}
From Eq. (2) in Sec. 3 in the main text, the holding update reads
\begin{align*}
\Delta w &= (\eta/\tau) \varepsilon_i \tilde{\gamma}_{j1}.
\end{align*}
\item[C2.] The presynaptic neuron recovers {\it after} the postsynaptic neuron recovers. The probability of this event occurring is $\tilde{\gamma}_{j1} / (\tilde{\gamma}_{i}+\tilde{\gamma}_{j1})$. Let $\varepsilon_j$ be the time of the presynaptic recovery and $\varepsilon_j+\varepsilon_i$ be that of the postsynaptic recovery.  In this case, we have the following:
\begin{align*}
\varepsilon_j &\sim \text{Exp}(\tilde{\gamma}_{i}+\tilde{\gamma}_{j1}),\\
w_2 &= w_1 + (\eta/\tau) \varepsilon_j \tilde{\gamma}_{j1}-  (\eta/\tau),\\
\gamma_{j2} &= re^{(b_{j}+w_2)/\tau},\quad\tilde{\gamma}_{j2} = r/\gamma_{j2},\\
\varepsilon_i &\sim \text{Exp}(\tilde{\gamma}_{i}),
\end{align*}
and the weight update reads
\begin{equation*}
\Delta w = (\eta/\tau)[\varepsilon_j \tilde{\gamma}_{j1}  -\varepsilon_i  \gamma_{j2}-1].	
\end{equation*}

\end{enumerate}
Consequently, combining the two cases, the expected synaptic update is
\begin{align*}
&(\eta/\tau)\tilde{\gamma}_{i} \tilde{\gamma}_{j1}\int_0^\infty \!\!\varepsilon_i  e^{-(\tilde{\gamma}_{i}+\tilde{\gamma}_{j1})\varepsilon_i} \,d\varepsilon_i 
+(\eta/\tau)\tilde{\gamma}_{i}\tilde{\gamma}_{j1} \cdot \\ &\int_0^{\infty}\!\!e^{-(\tilde{\gamma}_{i}+\tilde{\gamma}_{j1})\varepsilon_j} \!\int_0^{\infty}\!\!(\varepsilon_j\tilde{\gamma}_{j1}  -\varepsilon_i  \gamma_{j2}-1) e^{-\tilde{\gamma}_{i}\varepsilon_i} d\varepsilon_i d\varepsilon_j,
\end{align*}
which simplifies to $-(\eta/\tau)\varphi(w_1)$, where 
$$
\varphi(w_1) \equiv \frac{\gamma_{i}e^{-\eta/\tau}}{\tilde{\gamma}_{i}+(1-\eta/\tau)\tilde{\gamma}_{j1}} > 0.
$$
Note that in the limit $\eta/\tau \ll 1$, we get the approximation
$$
\varphi(w_1) \approx \frac{\gamma_{i}}{\tilde{\gamma}_{i}+\tilde{\gamma}_{j1}}.
$$
Suppose the refractory periods are shorter on average than the armed periods as assumed in the main text, we deduce that $\gamma_{i}$ is smaller than $\tilde{\gamma}_{i}$ and $\tilde{\gamma}_{j1}.$ Hence, $\varphi(w_1) < 1$.

Using these results, we will now derive the synaptic update when the presynaptic spike occurs at $t=0$ and the postsynaptic spike {\it follows} at $t=\varepsilon$, the situation neccesary for computing the expected synaptic weight in one branch of the STDP-like curves. 

Define the following firing rates 
\begin{equation*}
\gamma_{j} = re^{(b_{j}+w)/\tau}, \tilde{\gamma}_{j} = r/\gamma_{j},	
\end{equation*}

Assume the presynaptic recovery time is also exponentially distributed, with the rate $\tilde{\gamma}_{i}$. There are two cases to consider here.
\begin{enumerate}
\item[C1.] The presynaptic neuron recovers at time $t=\varepsilon_i$ before the postsynaptic spike. This event occurs with probability $1-e^{-\tilde{\gamma}_i\varepsilon}$. The probability density for $\varepsilon_i$ to be in $(0,\varepsilon)$, and the corresponding synaptic update are:
\begin{align*}
p(\varepsilon_i) &= \tilde{\gamma}_i e^{-\tilde{\gamma}_i \varepsilon_i} /(1-e^{-\tilde{\gamma}_i\varepsilon}),\\
\Delta w & =  -(\eta/\tau) \varepsilon_i \gamma_j.
\end{align*}
\item[C2.] The presynaptic neuron recovers after the postsynaptic spike. The probability of this event occurring is $e^{-\tilde{\gamma}_i\varepsilon}$. Thus, at $t=\varepsilon$, we have $x_i=x_j=1$, and
\begin{align*}
w_1 &= w- (\eta/\tau) \varepsilon \gamma_j  + (\eta/\tau).
\end{align*}
Consequently, the expected synaptic update is $(\eta/\tau)[-\varphi(w_1)-  \varepsilon \gamma_j  +1]$.
\end{enumerate}
Combining both cases, the expected synaptic update when the presynaptic neuron spikes {\it before} the postsynaptic neuron simplifies to:
$$
\mathbb{E}[\Delta w] =(\eta/\tau)e^{-\tilde{\gamma}_i\varepsilon}[1+ \gamma_i \gamma_j(1-e^{\tilde{\gamma}_i\varepsilon})-\varphi(w_1)].
$$
In the limit $\varepsilon \ll 1$, we get the desired approximation:
$$
\mathbb{E}[\Delta w] \approx C_1 e^{-\tilde{\gamma}_i\varepsilon},\quad C_1= (\eta/\tau)[1-\varphi(w_1)] > 0.
$$
This gives the positive branch of the STDP-like curves.

Finally, to study the other branch of the STDP-like curves, suppose the presynaptic spike occurs at $t =\varepsilon$ {\it after} the postsynaptic spike at $t=0$. Assume, as usual, the postsynaptic recovery time is exponentially distributed, with rate $\tilde{\gamma}_j$. There are two cases to consider.
\begin{enumerate}
\item[C1.] The postsynaptic neuron recovers at time $t=\varepsilon_j$ before the presynaptic spike. This event occurs with probability $1-e^{-\tilde{\gamma}_j\varepsilon}$. The synaptic update is zero for $0 < t < \varepsilon$ because  $x_i=0$, and for $\varepsilon < t < \varepsilon+\varepsilon_i$ where $\varepsilon_i$ is the presynaptic recovery period, the update is nonzero.  Thus,
\begin{align*}
\varepsilon_i &\sim \text{Exp}(\tilde{\gamma}_{i}),\\
\Delta w & =  -(\eta/\tau) \varepsilon_i  \gamma_{j}.
\end{align*}
\item[C2.] The postsynaptic neuron recovers after the presynaptic spike. The probability of this event occurring is $e^{-\tilde{\gamma}_j\varepsilon}$. Thus, at $t=\varepsilon$, we have $x_i=x_j=1$ and
\begin{align*}
w_1 &= w+ (\eta/\tau) \varepsilon \tilde{\gamma}_j .
\end{align*}
Consequently, the expected synaptic update is $(\eta/\tau)[-\varphi(w_1)+\varepsilon \tilde{\gamma}_j]$.
\end{enumerate}
Combining both cases, the expected synaptic update when the presynaptic neuron spikes {\it after} the postsynaptic neuron simplifies to:
\begin{align*}
\mathbb{E}[\Delta w] =&- (\eta/\tau)e^{-\tilde{\gamma}_j\varepsilon}[\varphi(w_1)-(1-e^{\tilde{\gamma}_j\varepsilon} )\gamma_{i}\gamma_{j}  -\varepsilon \tilde{\gamma}_j].
\end{align*}
In the limit $\varepsilon \ll 1$, we get the desired approximation:
$$
\mathbb{E}[\Delta w] \approx -C_2  e^{-{\gamma}_j\varepsilon},\quad C_2= (\eta/\tau)\varphi(w_1) > 0,
$$
which is the negative branch of the STDP-like curves.
\end{proof}

\section{C: Training McCulloch-Pitts Networks}
\label{sec:experiments}

The pseudocode for MPN training is shown in Algorithm \ref{train}. Parameters are updated according to appropriate gradients of the transition terms and the holding terms. 

\begin{algorithm}[!hb]{}
\caption{MPN Training Algorithm}
\label{train}
\begin{algorithmic}[1]
\STATE $\mathcal{D}(t) \gets$ training data 
\STATE  Initialize $\theta = \{ \bm{w},\bm{b} \}$
\WHILE {not convergence of $\bm{w},\bm{b}$}
  \FOR {$n$ from $1$ to $N$}
    \STATE $\Delta_1 \bm{w}, \Delta_1 \bm{b} \gets $ gradients from $\mathcal{H}_n(\theta)$
    \STATE Update parameters $\bm{w}, \bm{b}$ using $\Delta_1 \bm{w}, \Delta_1 \bm{b} $
    \STATE $\Delta_2 \bm{w}, \Delta_2 \bm{b} \gets $ gradients from $\mathcal{T}_n(\theta)$
    \STATE Update parameters $\bm{w}, \bm{b}$ using $\Delta_2 \bm{w}, \Delta_2 \bm{b} $
  \ENDFOR
\ENDWHILE
\end{algorithmic}
\end{algorithm}

\begin{figure}[t]
  \centering
   {\includegraphics[scale=0.5]{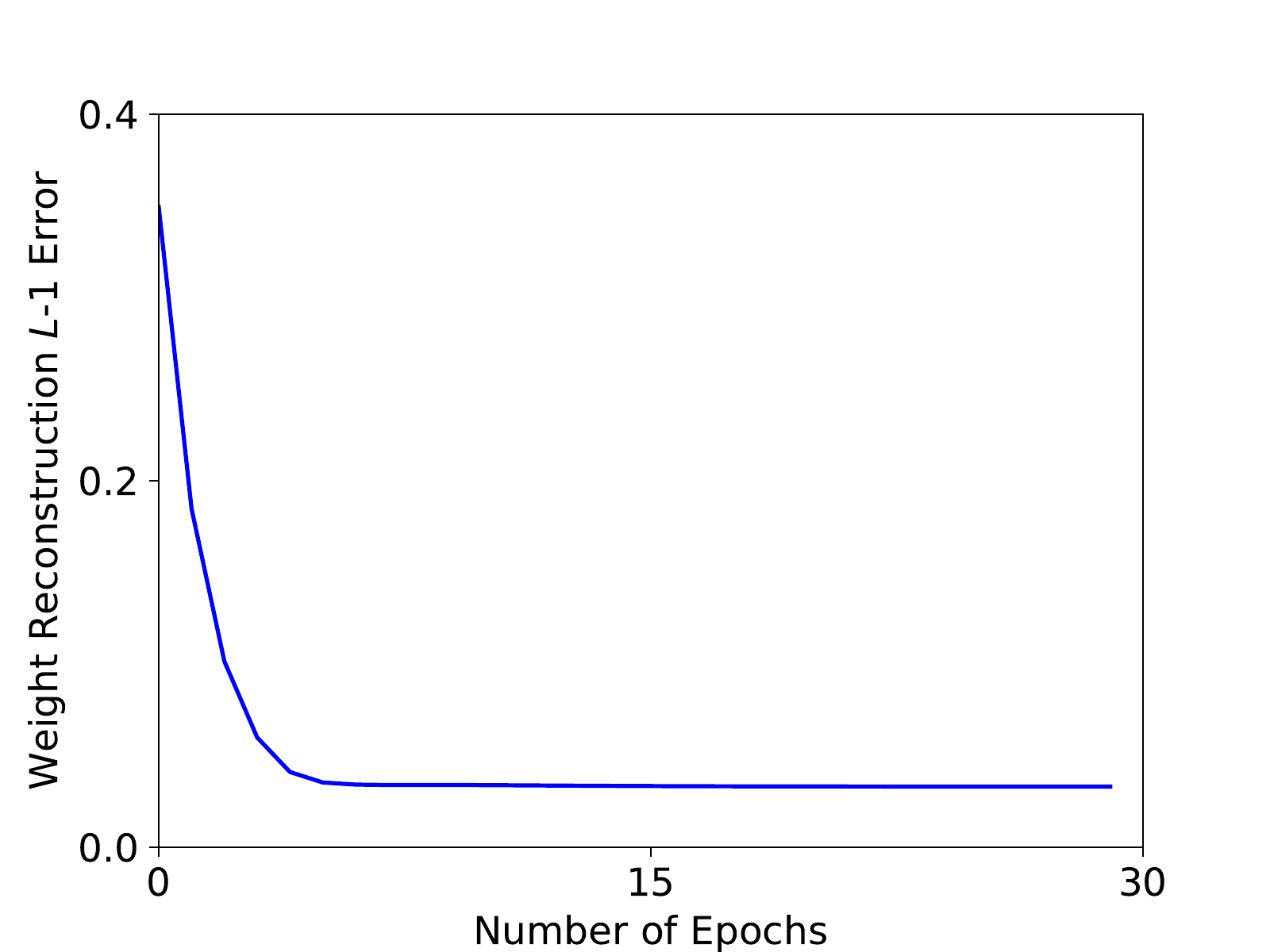}}
   \caption{ Weights reconstruction error during training. }
  \label{consistency2}
\end{figure}

\begin{figure}[t]
  \centering
  {\includegraphics[scale=0.27]{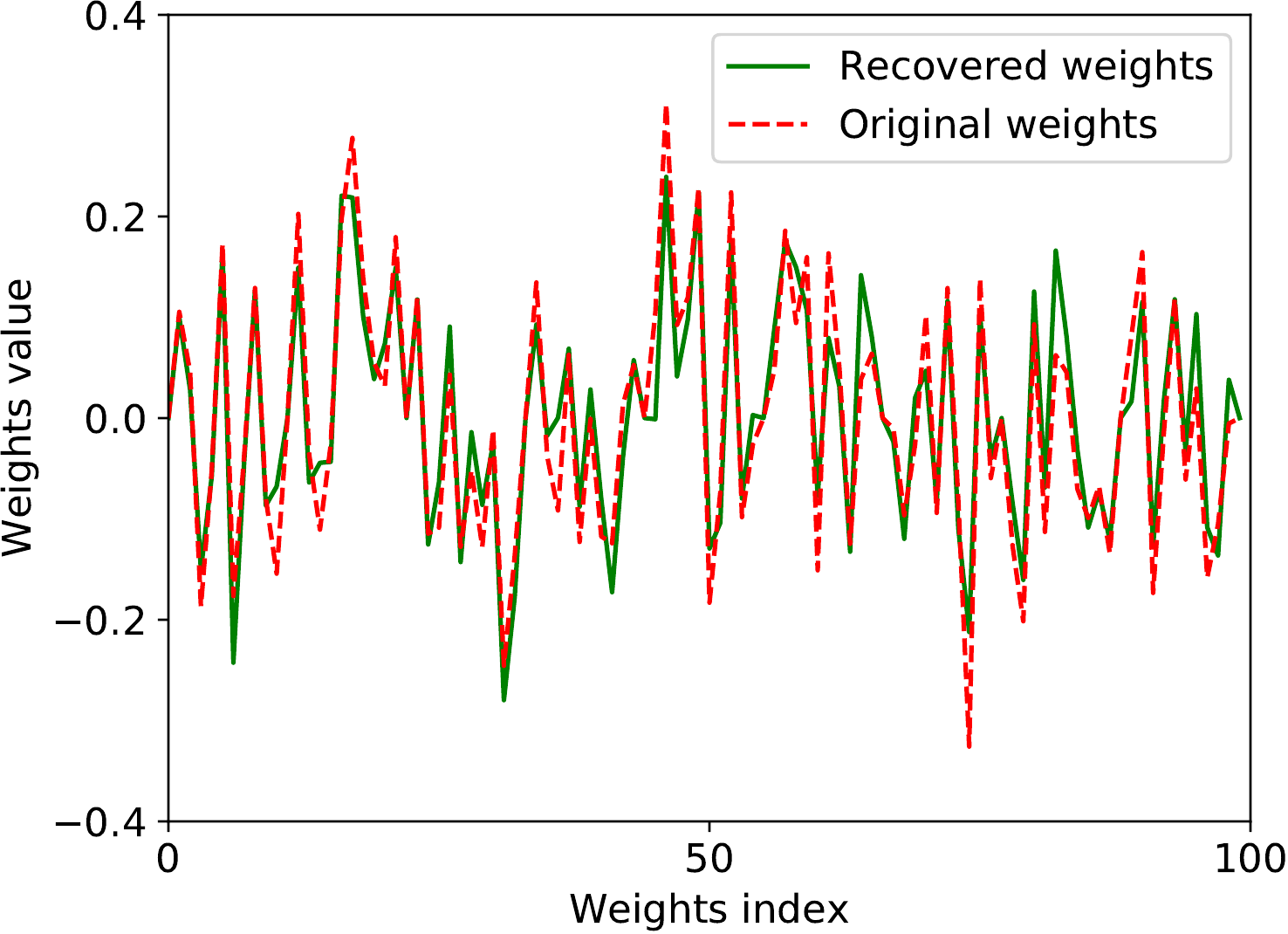}} 
   {\includegraphics[scale=0.27]{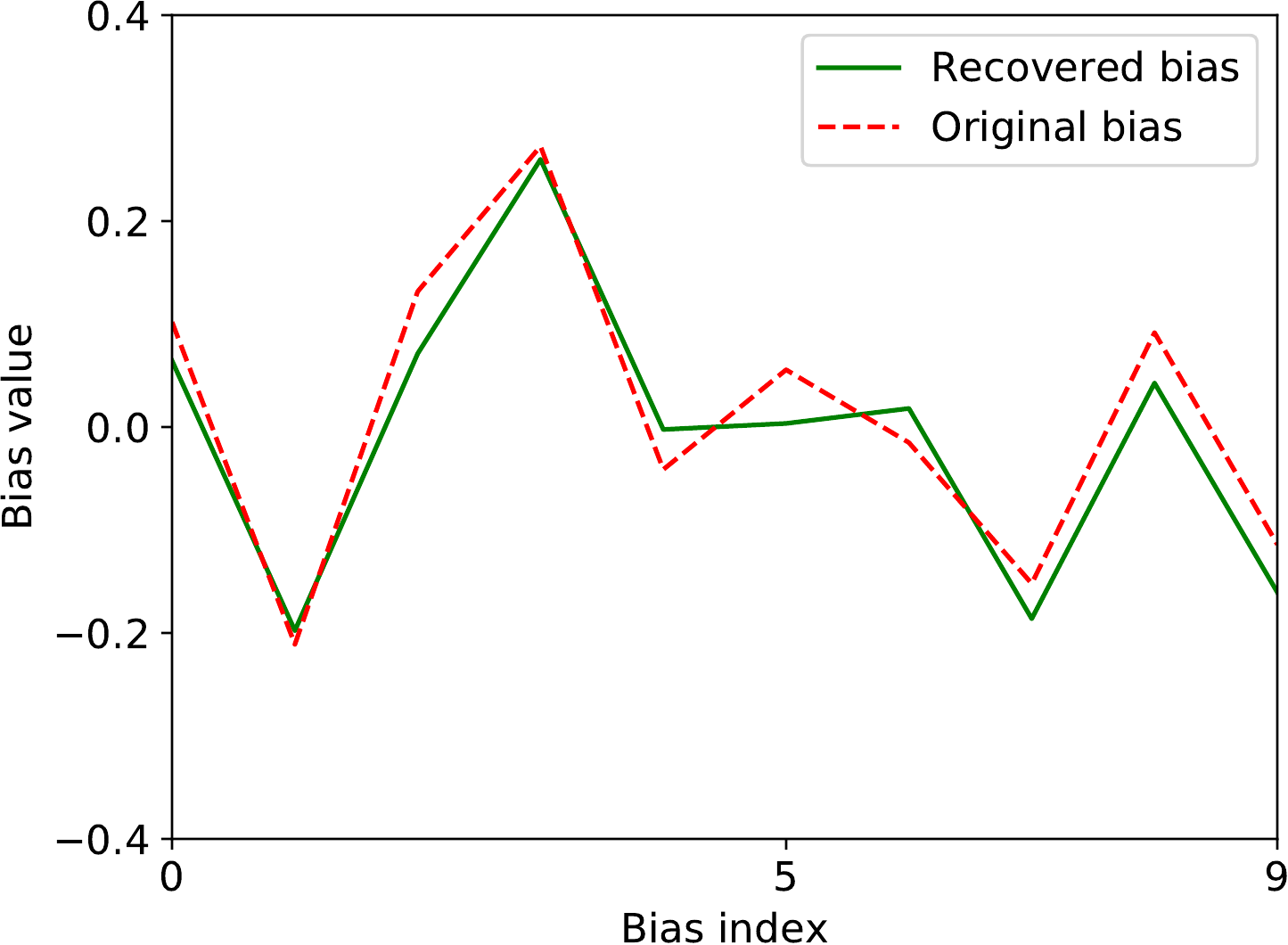}} 
   \caption{ Parameters inferred from the MPN learning rule agree with the parameters of the MPN used to generate the dynamics. This shows that the MPN is self-consistent. }
  \label{consistency}
\end{figure}

\section{D: Experimental Results} 
The codes for training and manipulating the McCulloch-Pitts network, including the videos demonstrating MPN's spatiotemporal pattern completion capability will be available at \url{https://github.com/owen94/MPNets}.
\subsection{1. Consistency}

The weight reconstruction error during training is shown in Fig.\ref{consistency2}. We notice that our algorithm converges  quite fast, reaching the minimum only after 5 epochs.
Fig.\ref{consistency} compares the recovered weights $\hat{w}$ and biases $\hat{b}$ to those of the original ones.

\subsection{2. Robust Spatiotemporal Memory}
In the main paper, we show how MPNs can learn two disjoint cycles. Here we visualize the topology of transitions between neuron states for the multiple cycles experiments in Fig.\ref{fig: 2cycles_topology}.

\begin{figure}[t]
\begin{center}
  {\includegraphics[scale=0.3]{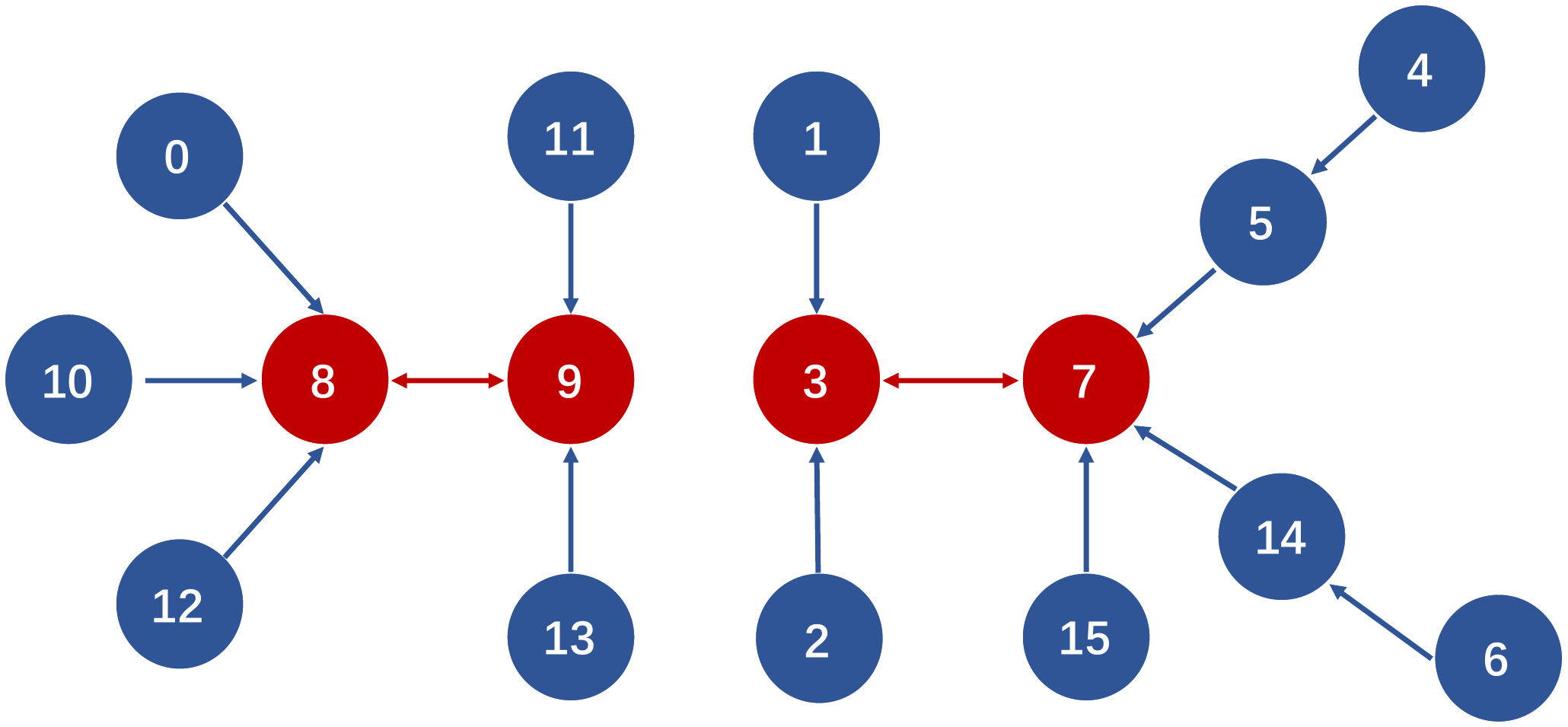}} 
  {\includegraphics[scale=0.3, center]{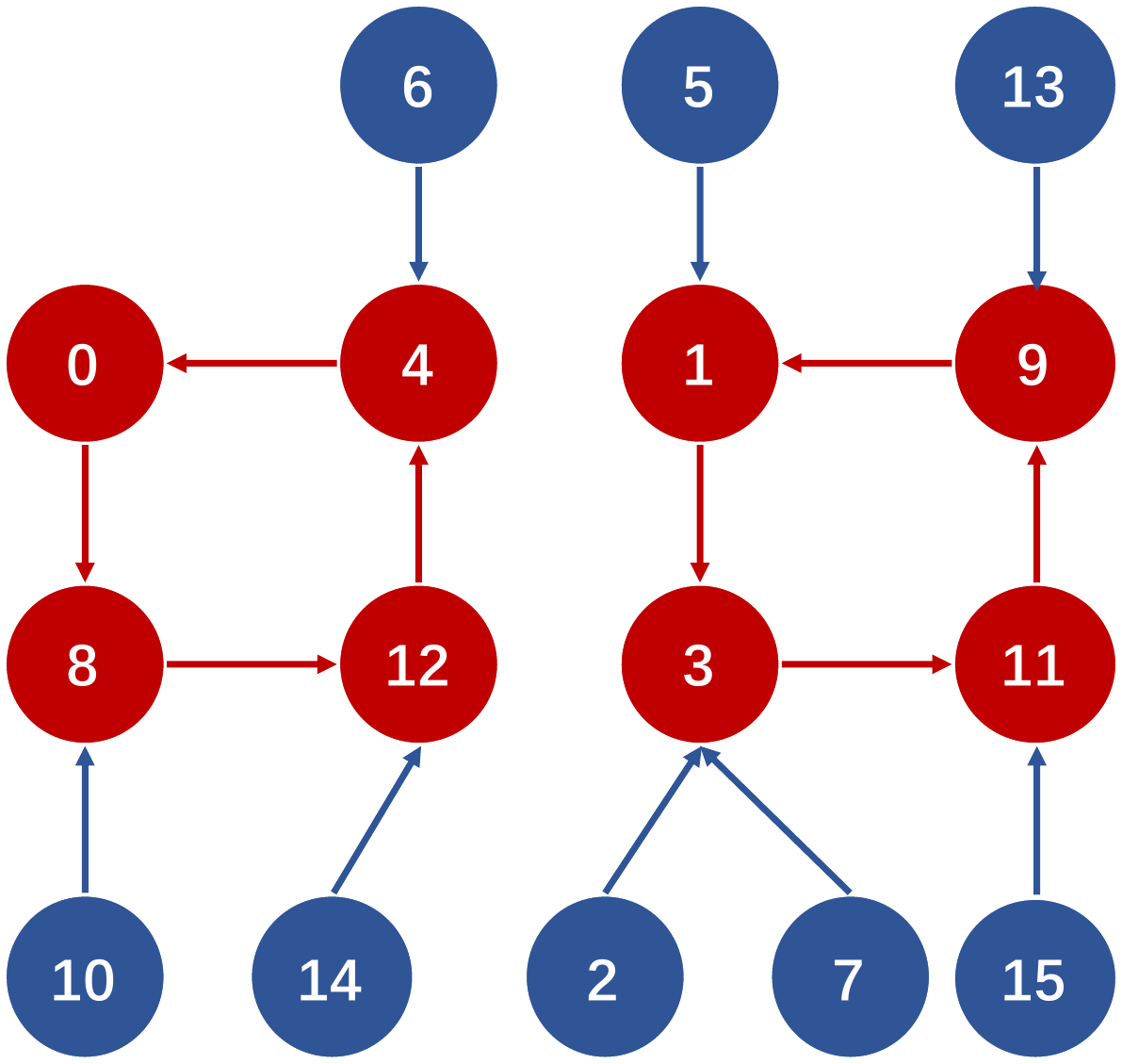}} 
    \caption{Deterministic state diagrams generated by an MPN with 4 neurons trained to store two robust cycles. The number on each node is the decimal representation of the corresponding state binary vector of length 4, e.g. the state $(1,0,1,0)$ is represented by $1 \times 2^0 + 0 \times 2^1 + 1 \times 2^2 + 0 \times 2^3 = 5.$ Blue nodes are transient states that flow along the arrows toward one of the two stable cycles. Red nodes are elements of stable cycles. Which robust cycle (stable periodic attractor) is reconstructed depends on the basin of attraction for the initial state. \textbf{(top)} shows the scenario when two memorized cycles are of period 2, while \textbf{(bottom)} contains two memorized cycles of period  $ 4$. In the presence of noise, transitions between the two deterministically disjoint cycles are possible, most likely via the minimum number of bits required to flip to reach the other cycle (measured by hamming distance). The minimal hamming distance between the two memorized cycles in \textbf{(top)} is 2 (e.g., The state can transverse via noisy dynamics from one cycle to the other via $\{(1,0,0,1) \rightarrow (0,0,0,1) \rightarrow (0,0,1,1)\} $, while the minimal hamming distance in \textbf{(bottom)} is 1. Hence, configurations \textbf{(bottom)} are more prone to cycle switching than \textbf{(top)} in the presence of noise, as illustrated in Fig.\ref{fig: 2cycles} in the main text. }
  \label{fig: 2cycles_topology}
\end{center}
\end{figure}

%
\begin{figure}[t]
  \centering
  {\includegraphics[scale=0.5]{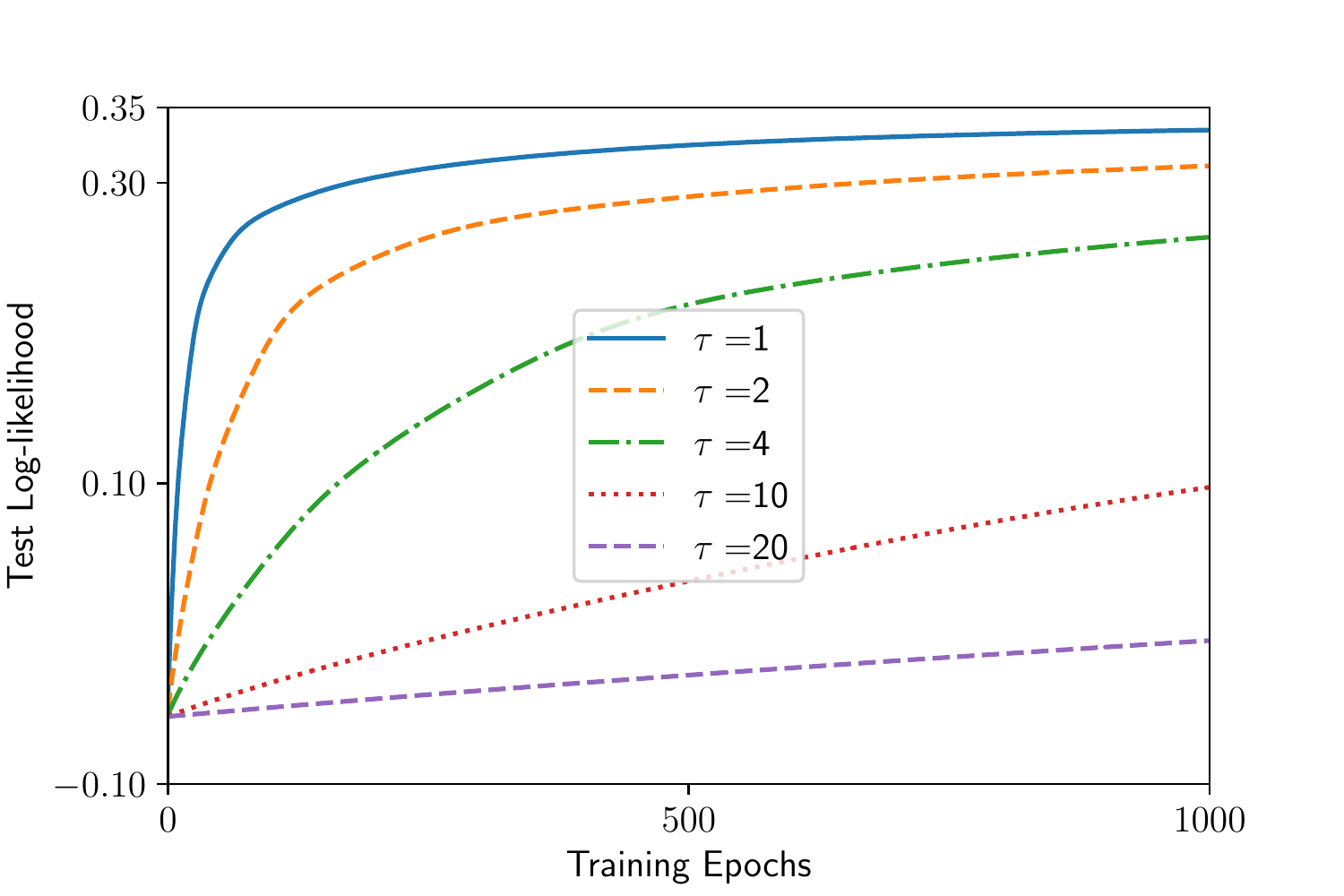}} 
   \caption{The test log-likelihood in Tim Blanche dataset.}
  \label{lld}
\end{figure}

\begin{figure*}[th!]
\label{fig: trainlld}
  \centering
  {\includegraphics[scale=0.5]{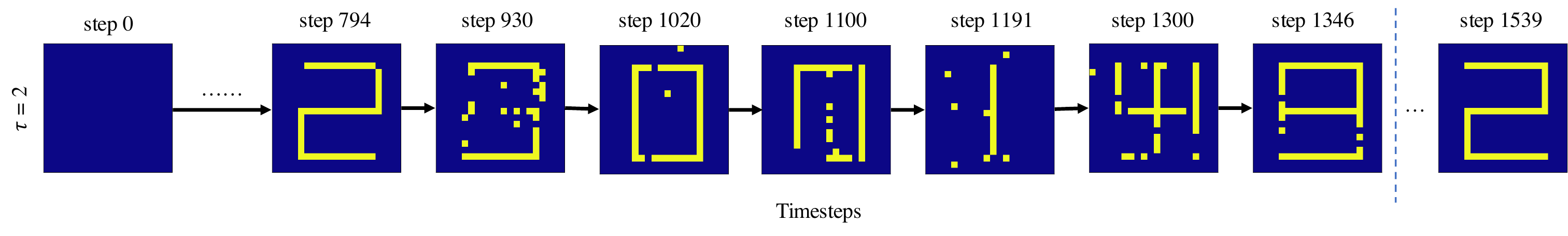}} 
   \caption{Spatiotemporal pattern completion: dynamics generated by the trained MPN with $\tau=2$ }
  \label{icmlseq}
\end{figure*}


\subsection{3. Application: Robustly Memorizing Sequences of Binary Pictures}

As explained in the main paper, we randomly assign 100 one-hop transitions between every successive pair of the static memories; i.e., the length for the entire training sequence, including the transitions from the all-zero state to ``2", ``2" to ``0", ``0" to ``1", and ``1" to ``9" is 400. Here, we show how we generate such training data by taking transitions from ``2" to ``0" as an example.

The model to be trained consists of $20 \times 20 = 400$ neurons that are originally all disconnected. Suppose the model begins with the spiked states ``2"; our goal is to bias the dynamics to transit into state ``0" through multiple one-hop transitions. To do this, we clamp large positive $z_i $ values for the neurons $i$ that are supposed to spike ($x_i = 1$) at the spiked-state of ``0", and large negative $z_j$ values for neurons $j$ that are supposed to be armed ($x_j=0$) at the armed-state of ``1". Then we generate the data following dynamics described in the subsection Dynamics of the main paper, where we use the predefined $z_i, z_j$ rather than using $z_i =\sum_{ji\in E} w_{ji}x_j + b_i$. 
Consequently, we can generate a sequence that stochastically transits from ``1" to ``0". Note that in this scenario, the dynamics is not guaranteed to exactly end up after 100 steps at ``0". Instead, it may end up at states nearby ``0" (nearby in Hamming distance). In our experiments, we set the  hyper-parameter $z_i = - z_j$ and choose  $z_i$ such that $\frac{z_i}{\tau} = 6$ in order to obtain transitions that can transit from one static memory to the other within 100 steps.  Timings are also sampled accordingly. 

From the procedure for generating training data mentioned above, we train the MPN based on Algorithm.\ref{train}. The dynamics generated from the trained MPN for $\tau=0.5$ and $\tau=1$ are shown in the main paper. For $\tau=2$, see Fig.\ref{icmlseq} below. 

\textbf{Videos:} We also created videos for the dynamics generated by the trained MPN under $\tau=0.5$, $\tau=1$, and $\tau=2$. The states (images) used for the videos are sampled as follows. Given the all-zero input state, we run the dynamics for 2300 one-hop transitions (to ensure a few ``2019" sequences will be repeated) and create a video with 32 frames per second (FPS) and 64 FPS. We can see that the trained MPNs can robustly recall the multi-hop away binary pictures in the correct sequential order, thus MPNs can perform robust spatiotemporal pattern completion! 

\subsection{4. Application: Inferring Generative Models for Neural Spike-train Data}

We preprocess the inter-spike intervals (ISI) data measured in microseconds($\mu s$) to be measured in seconds ($s$) to avoid numerical overflow. We set the refractory period to 200$\mu s$ for every spike, while the minimum experimental ISI is measured at 400$\mu s$. 

\begin{figure*}[t]
  {\includegraphics[width=1.\textwidth, center]{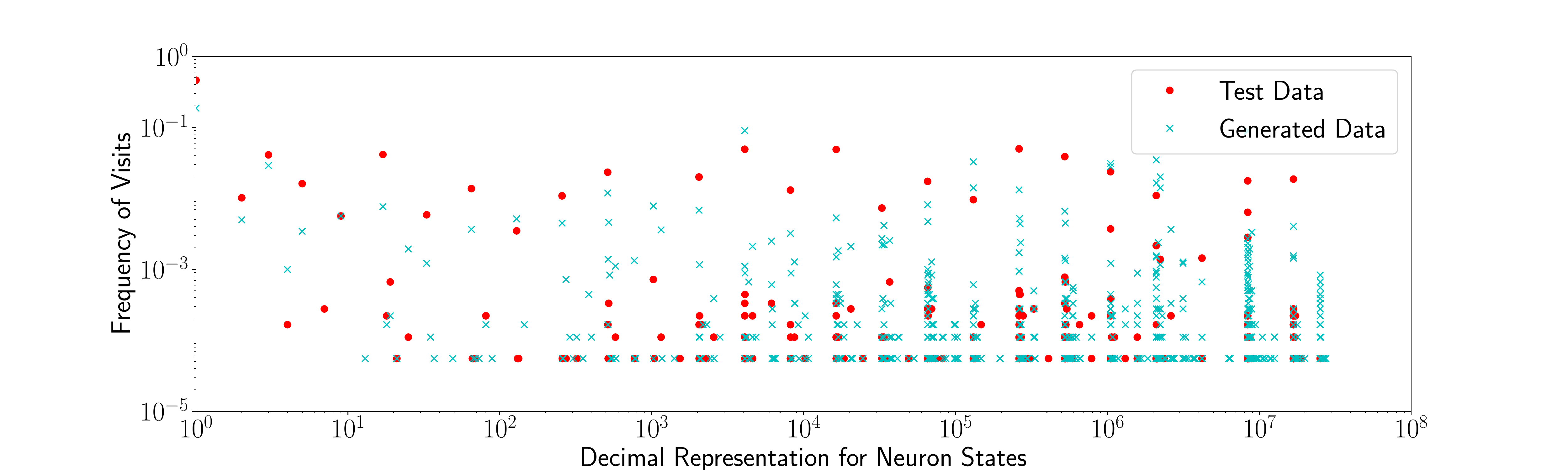}} 
   \caption{Histogram of the neuron states in test data and generated data. The trained MPN can reproduce the statistics of the highly visited neuron spike-states in the experimental spike-train data very well.}
     \label{hist}
\end{figure*}
\begin{figure*}[t]
  {\includegraphics[width=1.\textwidth, center]{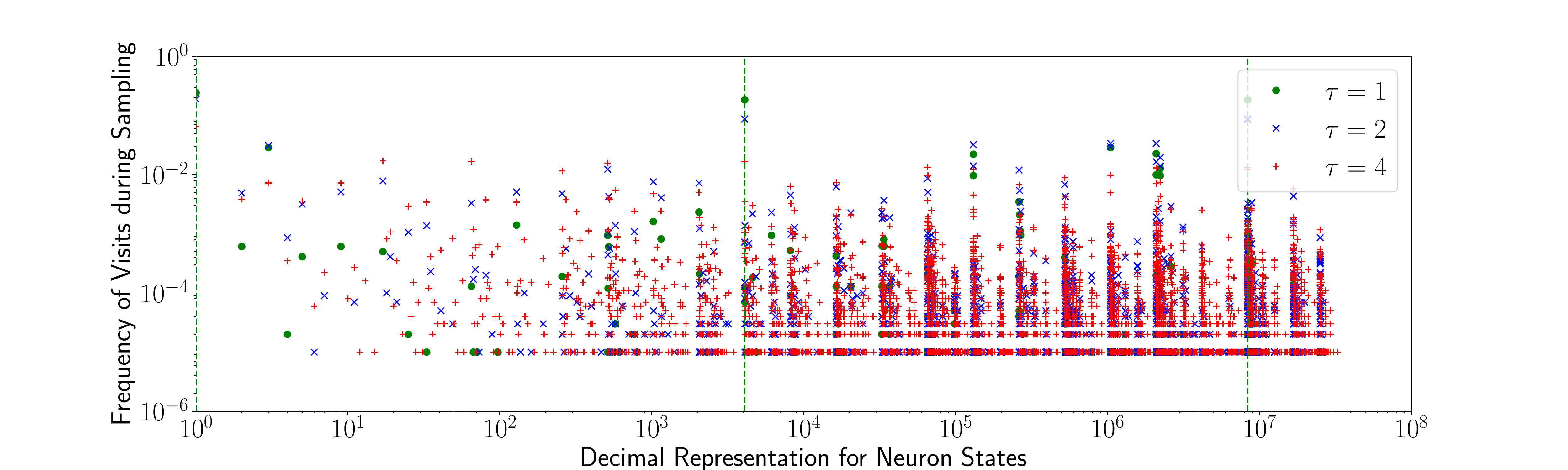}} 
   \caption{Histogram of neuron states collected from the dynamics generated by the MPN trained from Tim Blanche experiments at different $\tau$. The temperature $\tau$ plays a role of intrinsic noise that drives the dynamics away from the states visited in the (very low temperature) deterministic limit. More states will become accessible at higher temperature.}
  \label{taus}
\end{figure*}

The test log-likelihood of the MPNs trained at different $\tau$ are shown in Fig.\ref{lld}. 
There is no overfitting in the test log-likelihood, and it achieves the maximum of 0.33 when $\tau=1$. 
Notice that as $\tau$ increases, the final achieved log-likelihood will decrease and learning becomes slower. This phenomenon is consistent with theoretical results as the objective function and the gradients depend exponentially on $1/ \tau$, leading to smaller log-likelihood and slower learning. We expect
the learning to be faster for larger $\tau$ if we also appropriately increase the learning rate. 

We also plot the visited frequency of all the $2^{25}$ neuron states in Fig.\ref{hist}. The binary states are represented as their decimal representations, and the frequency of visits at a neuron state is computed from $\frac{\# \text{transits at such state}}{\text{\# of all transits}}$. Notice that the statistics of these highly visited states in the test and in the generated data agree quite well. 

We also explore the dynamics generated by the trained MPNs at different temperature $\tau$, shown in Fig.\ref{taus}. As expected, one can clearly see that the temperature $\tau$ plays a role of intrinsic noise that drives the dynamics away from the states visited in the (very low temperature) deterministic limit. More states will be accessible at higher temperature.

%

\end{appendix}

\end{document}